\pgfplotsset{compat=1.4}
\def\NAT@spacechar{~}%
\date{}
\newcommand{\decprob}[3]{%
  {\def\descriptionlabel##1{\hspace\labelsep\quad{}\it{}##1:}%
    \par\vspace{\topsep}\noindent
    \begin{minipage}{\textwidth}
      \quad \textsc{#1}%
      \begin{compactdesc}
      \item[Input] #2
      \item[Question] #3
      \end{compactdesc}
    \end{minipage}}\vspace{\topsep}}
\newcommand{\hs}[1]{$#1$\mbox{-}\textsc{Hit\-ting Set}}
\newcommand{\lt}{lin\-e\-ar-time}
\newtheorem{theorem}{Theorem}
\newtheorem{proposition}{Proposition}
\newtheorem{lemma}{Lemma}
\theoremstyle{definition}
\newtheorem{definition}{Definition}
\newtheorem{example}{Example}
\DeclareMathOperator\btrue{true}
\DeclareMathOperator\bfalse{false}
\DeclareMathOperator\bigTh{\Theta}
\DeclareMathOperator\bigOm{\Omega}
\DeclareMathOperator\bigO{O}
\DeclareMathOperator\petcount{petals}
\DeclareMathOperator\verts{vertices}
\DeclareMathOperator\intersect{intersection}
\DeclareMathOperator\markused{used}
\newcommand{\HG}{\ensuremath{H}}
\newcommand{\HGout}{\ensuremath{G}}
\newcommand{\BG}{\ensuremath{B}}
\newcommand{\Edgs}{\ensuremath{E}}
\newcommand{\Edg}{\ensuremath{e}}
\newcommand{\Verts}{\ensuremath{V}}
\newcommand{\PH}{the po\-ly\-nom\-i\-al-time hierarchy}
\newcommand{\vernum}{n} \newcommand{\edgenum}{m}
\newif\ifcolorprint \newif\ifarxiv \colorprintfalse \arxivtrue
\tikzstyle{vertex} = [fill,shape=circle,node distance=80pt]
\tikzstyle{edge} = [fill,opacity=.5,fill opacity=.5,line cap=round,
\tikzstyle{elabel} =
\newcommand{\onepfive}{{1.5}} \newcommand{\cone}{{\ensuremath{b}}}
\newcommand{\ctwo}{{\ensuremath{c}}}
\newcommand{\wrel}{{\ensuremath{W}}} \newcommand{\uhs}{\textsc{Hitting
    Set}} 
\newcommand{\Petals}{\ensuremath{P}}
\newcommand{\Core}{\ensuremath{C}}
\newcommand{\dslice}{\ensuremath{\ell}}
\newcommand{\match}{\ensuremath{M}}
\newcommand{\matchvs}{\ensuremath{N}}
\title{Towards Optimal and Expressive Kernelization for $d$-Hitting
  Set\thanks{Supported by the DFG, project DAPA, NI~369/12. Parts of
    this work were done under DFG project AREG, NI~369/9. Earlier
    versions of this article appeared in the \emph{Proceedings of the
      18th Annual International Computing and Combinatorics Conference
      (COCOON'12)} and in the COCOON'12 special issue of
    Algorithmica~\citep{Bev13}.  Due to an implementation bug,
    Figure~3 of the Algorithmica article~\citep{Bev13} reports wrong
    sizes of the problem kernels obtained in the experiments and,
    consequently, the article gives a wrong interpretation of the
    experimental results.  This version gives corrected experimental
    results, adds additional figures, and more formally defines
    ``expressive kernelization''.}  }
\author{Ren\'e van Bevern\\Institut f\"ur Softwaretechnik und Theoretische Informatik\\
  TU Berlin, Germany, \\\texttt{rene.vanbevern@tu-berlin.de}}
\begin{document}

\def\algorithmautorefname{Al\-go\-rithm}%
\def\propositionautorefname{Prop\-o\-si\-tion}%
\def\constructionautorefname{Con\-struc\-tion}%
\def\definitionautorefname{Def\-i\-ni\-tion}%
\def\subfigureautorefname{Fig\-ure}%
\def\claimautorefname{Claim}%
\def\conjectureautorefname{Con\-jec\-ture}%
\def\lemmaautorefname{Lem\-ma}%
\def\exampleautorefname{Ex\-am\-ple}%
\def\corollaryautorefname{Cor\-ol\-lary}%
\def\observationautorefname{Ob\-ser\-va\-tion}%
\def\subsectionautorefname{Sec\-tion}%
\def\sectionautorefname{Sec\-tion}%
\def\chapterautorefname{Chap\-ter}%
\def\rruleautorefname{Re\-duc\-tion Rule}

\maketitle
\label{chap:hs}

\begin{abstract}
  \hs d is the NP-hard problem of selecting at most $k$~vertices of a
  hypergraph so that each hyperedge, all of which have cardinality at
  most~$d$, contains at least one selected vertex. The applications of
  \hs d are, for example, fault diagnosis, automatic program
  verification, and the noise-minimizing assignment of frequencies to
  radio transmitters.

  We show a linear-time algorithm that transforms an instance of \hs
  d %
  into an equivalent instance comprising at most
  $\bigO(k^d)$~hyperedges and vertices.  In terms of parameterized
  complexity, this is a \emph{problem
  kernel}. %
  Our kernelization algorithm is based on speeding up the well-known
  approach of finding and shrinking \emph{sunflowers} in
  hypergraphs, which yields problem kernels
  with structural properties that we condense into the concept of
  \emph{expressive kernelization}.

  We conduct experiments to show that our kernelization algorithm can
  kernelize instances with more than $10^7$~hyperedges in less than
  five minutes.

  Finally, we show that the number of vertices in the problem kernel
  can be further reduced to~$\bigO(k^{d-1})$ with additional
  $\bigO(k^{\onepfive d})$~processing time by nontrivially combining
  the sunflower technique with \hs d problem kernels due to Abu-Khzam
  and Moser.
\end{abstract}

\section{Introduction}
Many problems, like the examples given below, can be modeled as the
NP-hard \hs d problem: \decprob{\hs{d}}
{A hypergraph $\HG=(\Verts,\Edgs)$ with hyperedges whose cardinality
  is bounded from above by a constant~$d$, and a natural number~$k$.}
{Is there a \emph{hitting set} $S\subseteq\Verts$ with $|S|\leq k$ and
  $\forall \Edg\in\Edgs\colon\Edg\cap S\ne\emptyset$?}
Problems that can be modeled as \hs {d} arise,
among others, in the following fields.

\paragraph{Construction of Golomb rulers.} A \emph{Golomb ruler} of
length~$n$ is a subset of \emph{marks} $R\subseteq[n]$ such that no
pair of marks in~$R$ has the same distance as another pair.  The task
of finding shortest Golomb rulers with a fixed number of marks or
Golomb rulers of fixed length with a maximum number of marks arises,
among others, in radio frequency
allocation~\citep{Dra09}. \citet{SMNW14} showed how to construct
Golomb rulers using \hs4.  That is, $d=4$.

\paragraph{Fault diagnosis.} The task is to detect faulty components
of a malfunctioning system. To this end, those sets of components are
mapped to hyperedges of a hypergraph that are known to contain at
least one broken component~\citep{Rei87,dKW87}. By the principle of
Occam's razor, a small hitting set is then a likely explanation of the
malfunction.  In this application, $d$~is the maximum number of
components that a wrong observation depends on.

\paragraph{Program verification.} \citet{OC03} used \hs d in order to
automatically detect bugs in parallel Java programs while aiming for a
small slowdown of the program monitored at execution time.  In their
experiments, $d\leq 10$ was sufficient to debug complex software
suites.  In most cases, even $d\leq 5$ sufficed.  Remarkably, in this
application, one is interested in the question whether a hypergraph
allows for a hitting set of size at most~$k=d$, that is, both
$k$~and~$d$ are~small.

\bigskip\noindent The described problems have in common that a large
number of ``conflicts'' (the possibly $\bigO(n^d)$ hyperedges in a \hs
d instance) is caused by a small number of elements (the hitting
set~$S$), whose removal or repair could fix a broken system or
establish a useful property.

A powerful tool to attack NP-hard problems like \hs d is problem
kernelization~\citep{GN07,Kra14}---a form of provably efficient and
effective data reduction.  We show how to compute a problem kernel
with $\bigO(k^d)$~hyperedges for \hs d in linear time. We
experimentally evaluate our kernelization algorithm on \hs4 instances
arising in the construction of Golomb rulers with a maximum number of
marks and see that instances with more than~$10^7$ hyperedges are
kernelizable in less than five
minutes.  %

\paragraph{Known results.} \uhs{} is W[2]-complete with respect to the
parameter~$k$ when the cardinality of the hyperedges is
unbounded~\citep[Theorem~7.14]{FG06}. Hence, unless FPT${}={}$W[2], it
has no problem kernel. \citet{DvM14} showed that the existence of a
problem kernel with $\bigO(k^{d-\varepsilon})$~hyperedges for
any~$\varepsilon>0$ for \hs d implies a collapse of \PH{}. Therefore,
\hs d is assumed not to admit problem kernels with
$\bigO(k^{d-\varepsilon})$~hyperedges. For the same reason, \hs d
presumably has no polynomial-size problem kernels if $d$~is \emph{not}
constant.

Various problem kernels for \hs d have been
developed~\citep{NR03,NRT04,Dam06,FG06,Kra09,Abu10,Mos10,FK14}.
\citet{NR03} showed a problem kernel for \hs 3 of
size~$\bigO(k^3)$. They implicitly claimed that a polynomial-size
problem kernel for \hs d is computable in linear time, without giving
a proof for the running time. \citet{NRT04} claimed that a problem
kernel with $\bigO(k^{d-1})$~vertices is computable in
$\bigO(k(\vernum{}+\edgenum{})+k^d)$~time, which, however, does not
always yield correct problem kernels~\citep{Abu10}. %
\citet{Dam06} focused on developing small problem kernels for \hs d
and other problems with the focus on preserving \emph{all} minimal
solutions of size at most~$k$ (so-called \emph{full
  kernels}). \citet{FK14} presented a so-called \emph{streaming
  kernelization} for \hs d, which reads every hyperedge in the input
hypergraph at most once and has logarithmic memory usage for
fixed~$k$.  \citet{Abu10} showed a problem kernel
with~$\bigO(k^{d-1})$~vertices for \hs d, thus proving the previously
claimed result of \citet{NRT04} on the number of vertices in the
problem kernel. \citet[Section~7.3]{Mos10} built upon the work of
\citet{Abu10} to show a problem kernel for \hs d that also comprises
$\bigO(k^{d-1})$~vertices but, in contrast to the problem kernel of
\citet{Abu10}, yields a subgraph of the input hypergraph. The problem
kernels of \citet{Abu10} and \citet{Mos10} comprise
$\bigOm(k^{2d-2})$~hyperedges in the worst case.\footnote{Although not
  directly analyzed in the works of \citet{Abu10} and \citet{Mos10},
  this can be seen as follows: the kernel comprises vertices of a
  set~$W$ of ``weakly related'' hyperedges and an independent
  set~$I$. In the worst case, $|W|= k^{d-1}$, $|I|= dk^{d-1}$, and
  each hyperedge in~$W$ has $d$~subsets of size~$d-1$. Each such
  subset can constitute a hyperedge with each vertex in~$I$ and the
  kernel has $\bigOm(k^{2d-2})$~hyperedges.}

Several exponential-time algorithms for \textsc{Hitting Set} exist and
aim to decrease the exponential dependence of the running time on the
number of input vertices~\citep{SC10}, on the number of input
hyperedges~\citep{Fer06}, and on the size of the sought hitting
set~\citep{Fer10}. Also exponential-time approximation stepped into
the field of interest~\citep{BF11}, since, in polynomial time, \hs d
appears to be hard to approximate within a factor of better
than~$d$~\citep{KR08}.

\paragraph{Our results.} We show that a problem kernel for \hs d with
$\bigO(k^d)$~hyperedges and vertices is computable in linear time.
Thereby, we prove the previously claimed result by \citet{NR03} and
complement recent results in improving the efficiency of kernelization
algorithms~\citep{PSS09,vBHKNW11,Hag11,Kam13,FK14}.

Our problem kernel has useful structural properties that ensure the
interpretability of the problem kernel. We condense these
properties into the concept of \emph{expressive kernelization}.
Moreover, in the sense that a problem kernel with
$\bigO(k^{d-\varepsilon})$~hyperedges for some~$\varepsilon>0$ would
lead to a collapse of \PH{}, the size of our problem kernel is
optimal.

We implement our kernelization algorithm and evaluate its
applicability to the problem of constructing Golomb rulers with a
maximum number of marks and find instances with more than
$10^7$~hyperedges to be kernelizable in less than five~minutes.

Finally, using ideas of \citet{Abu10} and \citet{Mos10}, we show that
the number of vertices can be further reduced to $\bigO(k^{d-1})$ with
an additional amount of $\bigO(k^{\onepfive d})$~time. By merging
these techniques, we can compute in
$\bigO(\vernum{}+\edgenum{}+k^{\onepfive d})$~time a problem kernel
comprising $\bigO(k^d)$~hyperedges and $\bigO(k^{d-1})$~vertices.

\paragraph{Preliminaries.}\looseness=-1
A \emph{hypergraph}~$\HG=(\Verts,\Edgs)$ consists of a set of
\emph{vertices}~$\Verts$ and a set of \emph{(hyper)edges}~$\Edgs$,
where each hyperedge in $\Edgs$ is a subset of~$\Verts$.  We use
$n:=|\Verts|$ and $m:=|\Edgs|$. In a \emph{$d$-uniform hypergraph}
every edge has cardinality exactly~$d$. A 2-uniform hypergraph is a
\emph{graph}. A hypergraph~$\HGout=(\Verts',\Edgs')$ is a
\emph{subgraph} of its \emph{supergraph}~$\HG$ if
$\Verts'\subseteq\Verts$ and $\Edgs'\subseteq\Edgs$.  A
set~$S\subseteq \Verts$ intersecting every set in~$\Edgs$ is a
\emph{hitting set}.  A parameterized problem is a subset~$L\subseteq
\Sigma^*\times \mathbb N$~\citep{DF13,FG06,Nie06}.  A problem kernel
for a parameterized problem~$L$ is a polynomial-time algorithm that,
given an instance~$(I,k)$, computes an instance~$(I',k')$ such that
$|I'|+k'\leq f(k)$ and $(I', k')\in L\iff(I,k)\in L$. Herein, the
function~$f$ is called the \emph{size} of the problem kernel and
depends only on~$k$.

\paragraph{Paper outline.}
We start by giving a concept of expressive kernelization in
\autoref{sec:expressiveness}.

Then, we present an expressive linear-time kernelization algorithm for
\hs d in \autoref{sec:expl}, which we evaluate experimentally on
hypergraphs occurring in the computation of optimal Golomb rulers in
\autoref{sec:hsexp}.

Finally, we show how the number of vertices can be reduced
to~$\bigO(k^d)$ in additional $\bigO(k^{1.5k})$~time in
\autoref{sec:fewverts}.  Since the resulting problem kernel is not
expressive, we have not implemented it.

\section{Expressive kernelization}\label{sec:expressiveness}
The core component of our \lt{} kernelization algorithm for \hs d is
an algorithm to find and shrink \emph{sunflowers} in linear time.
Sunflowers are special constellations of hyperedges that \citet{ER60}
discovered to appear in any sufficiently large hypergraph and their
use in kernelization algorithms for \hs d is a standard
technique~\citep{FG06,Kra09,FSV13}.  They are defined as follows and
illustrated in \autoref{fig:bspsunflo}.

\begin{figure}\centering
  \begin{tikzpicture}[on grid, auto, node distance=1.5cm, inner
    sep=2pt]
    \tikzstyle{operator}=[circle,draw=black,fill=white,minimum
    size=6mm] \tikzstyle{edge} = [color=black,opacity=.2,line
    cap=round, line join=round, line width=22pt]
    
    \ifcolorprint \tikzstyle{s1style} = [color=magenta,opacity=.3]
    \tikzstyle{s2style} = [color=red,opacity=.3,line width=29pt]
    \tikzstyle{s3style} = [color=yellow,opacity=.4,line width=32pt]
    \tikzstyle{s4style} = [color=blue,opacity=.3, line width=35pt]
    \tikzstyle{s5style} = [color=green,opacity=.4,line width=26pt]
    \else \tikzstyle{s1style} = [] \tikzstyle{s2style} = [line
    width=29pt] \tikzstyle{s3style} = [line width=32pt]
    \tikzstyle{s4style} = [ line width=35pt] \tikzstyle{s5style} =
    [line width=26pt] \fi
    
    \node[operator] (w3) {}; \node[operator] (v3) [left of=w3] {};
    \node[operator] (v4) [below of=v3] {}; \node[operator] (w4) [left
    of=v3] {}; \node[operator] (w5) [left of=v4] {};
    
    \begin{pgfonlayer}{background}
      \draw[edge,s4style]
      (v3.center)--(w4.center)--(v4.center)--cycle;
      \draw[edge,s5style]
      (v3.center)--(v4.center)--(w5.center)--(v3.center);
      \draw[edge,s3style]
      (v4.center)--(w3.center)--(v3.center)--cycle;
    \end{pgfonlayer}
  \end{tikzpicture}
  \caption{A sunflower with three petals and two core elements.}
  \label{fig:bspsunflo}
\end{figure}
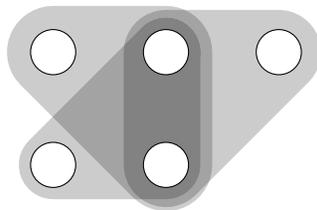

\begin{definition}
  A \emph{sunflower} in a hypergraph $\HG=(\Verts,\Edgs)$ is a set of
  \emph{petals}~$\Petals\subseteq \Edgs$ such that each pair of sets
  in~$\Petals$ intersects in exactly the same set~$\Core\subseteq
  \Verts$, which is called the \emph{core} (possibly,
  $C=\emptyset$). The \emph{size} of the sunflower is $|\Petals|$.
\end{definition}

\noindent The approach of finding and shrinking sunflowers yields
problem kernels that contain more structural information than the
formal definition of problem kernels requires.  Specifically,
sunflowers help computing problem kernels that have the following
three properties, which we henceforth require to be guaranteed by
\emph{expressive} problem kernels for \hs d and that we will describe
in more detail in the following.

\begin{definition}\label{exprdef}
  A kernelization algorithm for \hs d is \emph{expressive} if, given
  an instance~$(H,k)$, it outputs an instance~$(H',k')$ such that
  \begin{enumerate}[i)]
  \item\label{exprsub} $H'$ is a subgraph of~$H$,
  \item\label{exprmin} any vertex set of size at most~$k$ is a minimal
    hitting set for~$H$ if and only if it is a minimal hitting set
    for~$H'$, and
  \item\label{exprcert} it outputs a certificate for $(H',k')$~being a
    yes-instance if and only if $(H,k)$~is.
  \end{enumerate}
\end{definition}

\paragraph{Interpretability of the problem kernel.} The kernelization
algorithm should output a subgraph of the input hypergraph.
Kernelization algorithms for \hs d with this explicit goal have been
developed by \citet{Mos10} and \citet{Kra09}, since newly introduced
hyperedges or vertices in the problem kernel might not be
interpretable in the context of the original problem modeled as \hs
d. \citet{Kra09} exploited this property to show polynomial-size
problem kernels for a large class of problems formalizable as \hs d.
In some scenarios, as pointed out by \citet{AF06}, it is even
desirable that the kernelization algorithm outputs an \emph{induced}
subgraph of the input hypergraph.  However, our problem kernel for \hs
d will not satisfy this requirement.

\paragraph{Interpretability of solutions.} Any vertex set of size at
most~$k$ should be a minimal hitting set for the resulting problem
kernel if and only if it is a minimal hitting set for the original
instance.  If the input instance and the problem kernel allow for
exactly the same minimal hitting sets of size at most~$k$, the problem
kernel retains enough information for interpreting solutions and
finding alternative solutions without having to consider the input
hypergraph.  This property has been exploited by \citet{FSV13} as an
important building block in a polynomial-size problem kernel for a
problem that cannot easily be modeled as \hs d for
\emph{constant}~$d$.  As pointed out by \citet{FSV13}, this property
is stronger than those guaranteed by the \emph{full kernels}
introduced by \citet{Dam06}: full kernels contain all minimal hitting
sets of size at most~$k$ for the input hypergraph, but not necessarily
the information whether a hitting set is minimal.

\paragraph{Certifying data reduction.}  Similarly to how certifying
algorithms provide a certificate for the correctness of their
output~\citep{MMNS11}, an expressive kernelization algorithm should
provide a certificate for the correctness of the executed data
reduction.  Ideally, the proof that a certificate indeed certifies the
correctness should be easily understandable, so that a human can
easily verify the executed data reduction to be correct without having
to trust on the correctness of algorithms and their implementations.
A~sunflower~$\Petals$ with $k+1$~petals in a \hs d instance fulfills
this requirement: every hitting set~$S$ of size at most~$k$ contains
an element of the core~$C$ of~$\Petals$, since otherwise $S$~cannot
contain an element of each of the $k+1$~petals.  Thus, any additional
hyperedge in the hypergraph that contains~$C$ already contains an
element of~$S$; it is redundant and may be removed. The
sunflower~$\Petals$ is a certificate
for this being correct.

\begin{figure}
  \subfigure[A Boolean circuit with circle nodes representing gates
  and square nodes representing input and output nodes.]{
    \label{fig:circuit}
    \begin{tikzpicture}[->, shorten >=2pt, node distance=1.25cm,inner
      sep=2pt]
      \tikzstyle{inval}=[rectangle,draw=black,fill=gray!20,minimum
      size=6mm]
      \tikzstyle{outval}=[rectangle,draw=black,fill=gray!20,minimum
      size=6mm]
      \tikzstyle{operator}=[circle,draw=black,fill=gray!20,minimum
      size=6mm]

      \begin{scope}[yshift=0cm]
        \node[inval] (x2) {$x_1$}; \node[operator] (v2) [right
        of=x2]{$v_1$}; \node[operator] (w2) [right of=v2]{$w_1$};
        \node[outval] (fx2) [right of=w2]{$f_1(x)$};
      \end{scope}

      \begin{scope}[yshift=-0.8cm]
        \node[inval] (x4) {$x_2$}; \node[operator] (v4) [right
        of=x4]{$v_2$}; \node[operator] (w5) [right of=v4]{$w_2$};
        \node[outval] (fx5) [right of=w5]{$f_2(x)$};
      \end{scope}

      \begin{scope}[yshift=-1.6cm]
        \node (x5) {}; \node (v5) [right of=x5]{}; \node[operator]
        (w4) [right of=v5]{$w_3$}; \node[outval] (fx4) [right
        of=w4]{$f_3(x)$};
      \end{scope}

      \begin{scope}[yshift=-2.4cm]
        \node[inval] (x3) {$x_3$}; \node[operator] (v3) [right
        of=x3]{$v_3$}; \node[operator] (w3) [right of=v3]{$w_4$};
        \node[outval] (fx3) [right of=w3]{$f_4(x)$};
      \end{scope}

      \begin{scope}[yshift=-3.2cm]
        \node[inval] (x1) {$x_4$}; \node[operator] (v1) [right
        of=x1]{$v_4$}; \node[operator] (w1) [right of=v1]{$w_5$};
        \node[outval] (fx1) [right of=w1]{$f_5(x)$};
      \end{scope}

      \foreach \to in {v1,v3} { \draw (x1) -> (\to); }

      \foreach \to in {v2,v4} { \draw (x2) -> (\to);}

      \foreach \to in {v3,v1} { \draw (x3) -> (\to); }

      \foreach \to in {v4,v2} { \draw (x4) -> (\to); }

      \foreach \to in {w3,w1,w4} { \draw (v1) -> (\to); }

      \foreach \to in {w1,w3,w4,w5} { \draw (v3) -> (\to); }
      \foreach \to in {w2} { \draw (v2) -> (\to); }

      \foreach \to in {w2,w5} { \draw (v4) -> (\to); }

      \foreach \from/\to in {w1/fx1,w2/fx2,w3/fx3,w4/fx4,w5/fx5} {
        \draw (\from) -> (\to); }
    \end{tikzpicture}
  }\hfill \subfigure[Sets containing at least one faulty gate, found
  by the analysis of the circuit.]{
    \label{fig:hg}
    \begin{tikzpicture}[on grid, auto, node distance=1.5cm, inner
      sep=2pt]
      \tikzstyle{operator}=[circle,draw=black,fill=white,minimum
      size=6mm] \tikzstyle{edge} = [color=black,opacity=.2,line
      cap=round, line join=round, line width=22pt]

      \ifcolorprint \tikzstyle{s1style} = [color=magenta,opacity=.3]
      \tikzstyle{s2style} = [color=red,opacity=.3,line width=29pt]
      \tikzstyle{s3style} = [color=yellow,opacity=.4,line width=32pt]
      \tikzstyle{s4style} = [color=blue,opacity=.3]
      \tikzstyle{s5style} = [color=green,opacity=.4,line width=26pt]
      \else \tikzstyle{s1style} = [] \tikzstyle{s2style} = [line
      width=29pt] \tikzstyle{s3style} = [line width=32pt]
      \tikzstyle{s4style} = [] \tikzstyle{s5style} = [line width=26pt]
      \fi

      \node[operator] (w3) { $w_3$ }; \node[operator] (v3) [left
      of=w3] { $v_3$ }; \node[operator] (v4) [below of=v3] { $v_4$ };
      \node[operator] (w4) [below of=v4] { $w_4$ }; \node[operator]
      (w5) [left of=v4] { $w_5$ }; \node[operator] (w2) [left of=v3] {
        $w_2$ }; \node[operator] (v2) [left of=w2] { $v_2$ };
      \node[operator] (v1) [below of=v2] { $v_1$ }; \node[operator]
      (w1) [below of=v1] { $w_1$ };

      \node(S1) at ($(w1)!.5!(v1)$) {$S_1$}; \node(S2) at
      ($(v2)!.525!(w2)$) {$S_2$}; \node(S3) at ($(w3)!.45!(v3)$)
      {$S_3$}; \node(S4) at ($(w4)!.45!(v4)$) {$S_4$}; \node(S5) at
      ($(w5)!.45!(v4)$) {$S_5$};
      \begin{pgfonlayer}{background}
        \draw[edge,s1style] (v2.center)--(w1.center);
        \draw[edge,s2style] (v2.center)--(v3.center);
        \draw[edge,s4style] (v3.center)--(w4.center);
        \draw[edge,s5style]
        (v3.center)--(v4.center)--(w5.center)--(v3.center);
        \draw[edge,s3style]
        (v4.center)--(w3.center)--(v3.center)--cycle;
      \end{pgfonlayer}

    \end{tikzpicture}
  }
  \caption{Illustrations for \autoref{ex:ex2}.}
\end{figure}

\begin{example}\label{ex:ex2}
  Sunflowers not only certify the correctness of data reduction, but
  also lead the way to alternative solutions.  We illustrate this
  using an example of \hs d in a fault diagnosis context.

  \autoref{fig:circuit} represents a Boolean circuit.  It gets as
  input a 4-bit string~$x=x_1\dots x_4$ and outputs a 5-bit
  string~$f(x)=f_1(x)\dots f_5(x)$. The nodes drawn as circles
  represent Boolean gates, which output some bit depending on their
  two input bits. They might, for example, represent the logical
  operators~$\wedge$ or~$\vee$.  Assume that all output bits of~$f(x)$
  are observed to be the opposite of what would have been expected by
  the designer of the circuit. We want to identify broken gates. For
  each wrong output bit~$f_i(x)$, we obtain a set~$S_i$ of gates for
  which we know that at least one is broken because $f_i(x)$~is
  wrong. That is, $S_i$~contains precisely those gates that have a
  directed path to~$f_i(x)$ in the graph shown in
  \autoref{fig:circuit}. We obtain the sets~$S_1,\dots,S_5$
  illustrated in \autoref{fig:hg}.

  The sets~$S_1$ and~$S_4$ are disjoint. Therefore, the wrong output
  is not explainable by only one broken gate. Hence, we assume that
  there are \emph{two} broken gates and search for a hitting set of
  size~$k=2$ in the hypergraph with the vertices
  $v_1,\dots,v_4,w_1,\dots,w_5$ and hyperedges~$S_1,\dots,S_5$. The
  set~$\{S_3,S_4,S_5\}$ is a sunflower of size~$k+1=3$ with
  core~$\{v_3,v_4\}$.  Therefore, the functionality of gate~$v_3$
  and~$v_4$ must be checked.  If, in contrast to our expectations,
  both gates~$v_3$ and~$v_4$ turn out to be working correctly, the
  sunflower shows not only that at least three gates are broken, but
  also shows which gates have to be checked for malfunctions next:
  $w_3$, $w_4$, and~$w_5$.
\end{example}
\noindent There are few expressive kernelization algorithms for \hs d
in the literature.  For example, the algorithm of \citet{Abu10} does
not yield a subgraph of the input hypergraph.  Thus, it does not
satisfy \autoref{exprdef}\eqref{exprsub}, which has been ``fixed'' by
\citet{Mos10}. However, both problem kernels may discard minimal
solutions of size at most~$k$ and, thus, do not satisfy
\autoref{exprdef}\eqref{exprmin}.  \citet{Dam06} designed a problem
kernel that retains all minimal solutions of size at most~$k$.
However, it is not certifying and thus does not satisfy
\autoref{exprdef}\eqref{exprcert}.

\looseness=-1 We are aware of only one expressive kernelization algorithm for \hs d:
this is the problem kernel shown by \citet{Kra09}, which is also used
by \citet{FSV13}.  This is precisely the algorithm we will improve to
run in linear~time.
\section{A linear-time kernelization algorithm}
\label{sec:expl}
This section shows a linear-time computable problem kernel for \hs d
comprising $\bigO(k^d)$~hyperedges. That is, we show that a
hypergraph~$\HG$ can be transformed in linear time into a
hypergraph~$\HGout{}$ such that $\HGout{}$ has $\bigO(k^d)$~hyperedges
and allows for a hitting set of size~$k$ if and only $\HG$~does. In
\autoref{sec:fewverts}, we show how to shrink the number of vertices
to~$\bigO(k^{d-1})$.
\begin{theorem}\label{thm:optkern-lintime}
  \hs d allows for an expressive problem kernel with $d!\cdot d^{d+1}
  \cdot (k+1)^d$~hyperedges and $d$ times as many vertices that is
  computable in $\bigO(d\cdot \vernum{}+2^dd\cdot
  \edgenum{})$~time. %
\end{theorem}

\noindent We prove \autoref{thm:optkern-lintime} with the help of the
sunflower lemma of \citet{ER60}, who showed that every sufficiently
large hypergraph contains a sunflower with $k+2$~petals: if we shrink
all of these sunflowers, it follows that the resulting hypergraph will
be small. Kernelization algorithms based on this strategy, like those
of \citet{FG06} and \citet{Kra09} usually proceed along the lines of
repeatedly
\begin{itemize}
\item finding a sunflower of size~$k+2$ in the input hypergraph and
\item deleting redundant petals until no more sunflowers of size~$k+2$
  exist.
\end{itemize}
This approach has the drawback of finding only one sunflower at a time
and restarting the process from the beginning.

In contrast, to prove \autoref{thm:optkern-lintime}, we construct a
subgraph~$\HGout{}$ of a given hypergraph~$\HG$ not by hyperedge
deletion, but by a bottom-up approach that allows us to ``grow'' many
sunflowers in~$\HGout$ simultaneously, stopping ``growing sunflowers''
when they become too large.

\autoref{alg:eff-sunflo} repeatedly (after some initialization work in
lines \ref{lin:initstart}--\ref{lin:initend}) in \autoref{lin:addH}
copies a hyperedge~$\Edg$ from~$\HG$ to the initially empty~$\HGout{}$
unless we find in \autoref{lin:necess} that $\Edg$~contains the
core~$\Core$ of a sunflower of size~$k+1$ in~$\HGout{}$.  We maintain
the number of petals found for a core~$\Core$ in~$\petcount[\Core]$.
If we find that a hyperedge~$\Edg$ can be added to a sunflower with
core~$\Core$ in \autoref{lin:Hsuitable}, then we
increment~$\petcount[\Core]$ in \autoref{lin:incC} and mark the
vertices in $\Edg\setminus \Core$ as ``used'' for the core~$\Core$ in
\autoref{lin:markused}. This information is maintained by setting
``$\markused[\Core][v]\gets\btrue$.'' In this way, vertices
in~$\Edg\setminus \Core$ are not considered again for finding petals
for the core~$\Core$ in \autoref{lin:Hsuitable}, therefore ensuring
that petals later found for the core~$\Core$ intersect~$\Edg$ only
in~$\Core$.

\begin{algorithm}[h]\small
  \caption{Linear-Time kernelization for \hs d}
  \label{alg:eff-sunflo}
  \KwIn{A hypergraph~$\HG=(\Verts,\Edgs)$ and a natural number~$k$.}
  \KwOut{A hypergraph~$\HGout{}=(\Verts',\Edgs')$ with $|\Edgs'|\in
    \bigO(k^d)$.}  $\Edgs'\gets\emptyset$\nllabel{lin:initstart}\;
  \ForEach(\tcp*[f]{\textrm{Initialization for each
      hyperedge}}){$\Edg\in\Edgs$}{
    \ForEach(\tcp*[f]{\textrm{Initialization for all possible cores of
        sunflowers}}){$\Core\subseteq\Edg$}{ $\petcount[\Core]\gets
      0$\tcp*{\textrm{No petals found for sunflower with core~$\Core$
          yet}} \ForEach(\nllabel{lin:setfalse}){$v\in\Edg$}{
        $\markused[\Core][v]\gets\bfalse$\tcp*{\textrm{No vertex~$v$
            is in a petal of a\\sunflower with core~$\Core$ yet}}} }
    \nllabel{lin:initend} }
  \ForEach(\nllabel{lin:eachedge}){$\Edg\in\Edgs$}{
    \If{$\forall\Core\subseteq \Edg\colon\petcount[\Core]\leq
      k$\nllabel{lin:necess}}{
      $\Edgs'\gets\Edgs'\cup\{\Edg\}$\nllabel{lin:addH}\;
      \ForEach(\tcp*[f]{\textrm{Consider all possible cores for the
          petal~$\Edg$}}\nllabel{lin:eachcore}){$\Core\subseteq\Edg$}{
        \If{$\forall v\in \Edg\setminus
          \Core\colon\markused[\Core][v]=\bfalse$\nllabel{lin:Hsuitable}}{
          $\petcount[\Core]\gets\petcount[\Core]+1$\nllabel{lin:incC}\;
          \lForEach(\nllabel{lin:markused}){$v\in\Edg\setminus
            \Core$}{$\markused[\Core][v]\gets\btrue$} } } } }
  $\Verts':=\bigcup_{\Edg\in\Edgs'}\Edg$\nllabel{lin:vertunion}\;
  \Return{$(\Verts',\Edgs')$}\;
\end{algorithm}

\noindent By storing in $\petcount[\Core]$ a list of found petals, the
algorithm can output the discovered sunflowers without any increase in
running time.  Thus, it gives certificates for the correctness of the
executed data reduction.

It is important to note that, as illustrated in
\autoref{fig:algoillu}, the value in~$\petcount[\Core]$ is not
necessarily the size of the largest possible sunflower with
core~$\Core$, but depends on the order in that
\autoref{alg:eff-sunflo} processes the hyperedges of the input
hypergraph.  Computing the size of a largest sunflower with
core~$\Core$ is, for~$\Core=\emptyset$, the problem of computing a
maximum matching in a hypergraph, which is NP-hard \citep{Kar72}.

\begin{figure}
  \centering
  \begin{tikzpicture}[x=1cm, y=1cm]
    \tikzstyle{operator}=[circle,draw=black,fill=white,minimum
    size=6mm] \tikzstyle{edge} = [color=black,opacity=.2,line
    cap=round, line join=round, line width=22pt]

    \node[operator] (v) at (0,0) {$v$}; \node at (90:0.9) {$e_3$};
    \node[operator] (up1) at (90:1.5) {}; \node[operator] (up2) at
    (90:2.5) {$w_3$}; \node at (0:0.9) {$e_2$}; \node[operator] (r1)
    at (0:1.5) {}; \node[operator] (r2) at (0:2.5) {$w_2$}; \node at
    (270:0.9) {$e_1$}; \node[operator] (d1) at (270:1.5) {};
    \node[operator] (d2) at (270:2.5) {$w_1$}; \node[operator] (l2) at
    (180:2.5) {}; \node at (180:1.25) {$e_4$};

    \begin{pgfonlayer}{background}
      \draw[edge, line width=34] (v.center)--(up2.center); \draw[edge,
      line width=30] (v.center)--(d2.center); \draw[edge, line
      width=26] (v.center)--(r2.center);

      \draw[edge, line width=38] (v.center)--(l2.center)
      to[out=90,in=180](up2.center) to[out=0,in=90](r2.center)
      to[out=270,in=0](d2.center);
    \end{pgfonlayer}
  \end{tikzpicture}
  \caption{The result of applying \autoref{alg:eff-sunflo} depends on
    the order in that it processes the hyperedges of the input
    hypergraph~$\HG{}$.  If applied for~$k=2$, then
    \autoref{alg:eff-sunflo} will not add the hyperedge~$e_4$ to the
    output hypergraph~$\HGout{}$ if it before added~$e_1$, $e_2$,
    and~$e_3$, since it discovers~$e_4$ to contain the core~$\{v\}$ of
    the sunflower~$\{e_1,e_2,e_3\}$ with~$k+1=3$ petals.  However, if
    it first adds~$e_4$ to~$\HGout{}$, then it marks the
    vertices~$w_1$, $w_2$, and~$w_3$ as used for the sunflower with
    core~$\{v\}$. Thus, none of the hyperedges~$e_1$, $e_2$, and~$e_3$
    is recognized as a petal for a sunflower with core~$\{v\}$ and all
    shown hyperedges are added to the output hypergraph.}
  \label{fig:algoillu}
\end{figure}
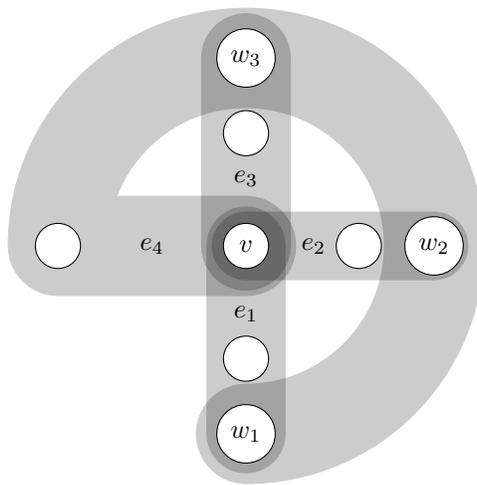

Towards proving \autoref{thm:optkern-lintime}, we now proceed as
follows. \autoref{sec:kern-correct} shows that
\autoref{alg:eff-sunflo} is correct and expressive.
\autoref{sec:hs-iskern} shows that the hypergraph output by
\autoref{alg:eff-sunflo} contains $\bigO(k^d)$~hyperedges. Finally,
\autoref{sec:algfast} shows that \autoref{alg:eff-sunflo} runs in
linear time.
\subsection{Correctness}\label{sec:kern-correct}

On our way to proving that \hs d has an expressive linear-time
computable problem kernel with $\bigO(k^d)$~hyperedges and thus
proving \autoref{thm:optkern-lintime}, we now prove the correctness
and expressiveness of \autoref{alg:eff-sunflo}. This, together with a
proof for the size of the output hypergraph and a proof of the running
time of \autoref{alg:eff-sunflo}, will provide a proof of
\autoref{thm:optkern-lintime}.

\begin{proposition}\label{lem:kern-correct}\label{prop:corr}
  Let $\HGout{}$~be the hypergraph returned by
  \autoref{alg:eff-sunflo} when given a hypergraph~$\HG$ and an
  integer~$k$. Then,
  \begin{enumerate}[i)]
  \item\label{minhs1} any hitting set~$S$ of size at most~$k$
    for~$\HGout{}$ is a hitting set for~$\HG$ and
  \item\label{minhs2} any minimal hitting set~$S$ of size at most~$k$
    for~$\HG$ is a hitting set for~$\HGout{}$.
  \end{enumerate}
  Moreover, $\HGout{}$~is a subgraph of~$\HG$ and any subset~$S$ of at
  most $k$~vertices of~$\HG$ is a minimal hitting set for~$\HG$ if and
  only if it is a minimal hitting set for~$\HGout{}$.
\end{proposition}

\begin{proof}
  By construction of~$\HGout{}$ from~$\HG{}$ in
  \autoref{alg:eff-sunflo}, it is clear that $\HGout{}$~is a subgraph
  of~$\HG{}$.  We now show that it is sufficient to prove
  \eqref{minhs1} and \eqref{minhs2} to also conclude the last
  statement of \autoref{prop:corr}: let $S$~be a minimal hitting set
  of size at most~$k$ for~$\HG$. By (\ref{minhs2}), it is a hitting
  set for~$\HGout{}$. Assume, towards a contradiction, that $S$~is not
  a \emph{minimal} hitting set for~$\HGout${}. Then, there is a
  hitting set~$S'\subsetneq S$ for~$\HGout$.  However, by
  (\ref{minhs1}), $S'\subsetneq S$~is also a hitting set
  for~$\HG$. This contradicts $S$~being a minimal hitting set
  for~$\HG$.  Symmetrically, let $S$~be a minimal hitting set~$S$ of
  size at most~$k$ for $\HGout{}$. By (\ref{minhs1}), $S$~is a hitting
  set for~$\HG{}$.  Assume, towards a contradiction, that $S$~is not a
  \emph{minimal} hitting set for~$\HG$. Then, there is a minimal
  hitting set~$S'\subsetneq S$ for~$\HG$. However, by~(\ref{minhs2}),
  $S'\subsetneq S$~is also a hitting set for~$\HGout{}$. This
  contradicts $S$~being a minimal hitting set for~$\HGout{}$.  It
  remains to prove (\ref{minhs1}) and (\ref{minhs2}).

  (\ref{minhs1}) Let $S$~be a hitting set of size at most~$k$
  for~$\HGout{}$. Obviously, all hyperedges that~$\HG$ and~$\HGout{}$
  have in common are hit in~$\HG$ by~$S$. We show that every
  hyperedge~$\Edg$ in~$\HG$ that is not in~$\HGout{}$ is also hit.  If
  $\Edg$ is in~$\HG{}$ but not in~$\HGout{}$, then adding $\Edg$
  to~$\HGout{}$ in \autoref{lin:addH} of \autoref{alg:eff-sunflo} has
  been skipped because the condition in \autoref{lin:necess} is
  false. That is, $\petcount[\Core]\geq k+1$ for some~$\Core\subseteq
  \Edg$. Consequently, for this particular~$\Core$, a
  sunflower~$\Petals$ with $k+1$ petals and core~$\Core$ exists
  in~$\HGout{}$, since we only increment $\petcount[\Core]$ in
  \autoref{lin:incC} if we find a suitable additional petal for the
  sunflower with core~$\Core$ in \autoref{lin:Hsuitable}.  Note that
  $\Core\ne\emptyset$ because, otherwise, $k+1$~pairwise disjoint
  hyperedges would exist in~$\HGout{}$, contradicting our assumption
  that~$S$ is a hitting set of size~$k$ for~$\HGout{}$. Since $|S|\leq
  k$, we have $S\cap \Core\ne\emptyset$. Therefore, since
  $\Core\subseteq\Edg$, the hyperedge~$\Edg$ is hit by~$S$ also
  in~$\HG$.

  (\ref{minhs2}) Let $S$~be a minimal hitting set of size at most~$k$
  for~$\HG=(\Verts,\Edgs)$. The set~$S':=S\cap\Verts'$ is a hitting
  set for~$\HGout{}=(\Verts',\Edgs')$ with $S'\subseteq S$: the
  set~$S$ contains an element of every hyperedge in~$\Edgs$ and, since
  $\Edgs'\subseteq\Edgs$ and $V'=\bigcup_{\Edg\in\Edgs'}\Edg$, the
  set~$S'$ contains an element of every hyperedge in~$\Edgs'$.  By
  (\ref{minhs1}), $S'$~is also a hitting set for~$\HG$. Since
  $S'\subseteq S$ and we required $S$~to be a \emph{minimal} hitting
  set of size at most~$k$ for~$\HG$, we have that $S'=S$ and, thus,
  that $S$~is a hitting set for~$\HGout{}$.
\end{proof}

\subsection{Problem kernel size}\label{sec:hs-iskern}

Having shown that \autoref{alg:eff-sunflo} is correct, we now show
that the hypergraph output by \autoref{alg:eff-sunflo} contains
$\bigO(k^d)$~hyperedges.  To prove \autoref{thm:optkern-lintime}, it
then remains to prove that \autoref{alg:eff-sunflo} runs in linear
time.

In order to show an upper bound on the size of the hypergraph output
by \autoref{alg:eff-sunflo}, we exploit an upper bound on the size of
the sunflowers in the output hypergraph:

\begin{lemma}\label{lem:few-petals}
  Let $\HGout$ be the hypergraph output by \autoref{alg:eff-sunflo}
  applied to a hypergraph~$\HG$ and a natural number~$k$. Every
  sunflower~$\Petals$ in~$\HGout$ with core~$\Core\notin\Petals$ has
  size at most~$d(k+1)$.
\end{lemma}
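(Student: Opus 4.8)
The plan is to fix the core $\Core$ and follow the two quantities that \autoref{alg:eff-sunflo} maintains for it: the counter $\petcount[\Core]$ and the set of vertices marked $\markused[\Core][\cdot]=\btrue$. First I would dispose of the trivial case $|\Petals|\le 1$. For $|\Petals|\ge 2$, applying the sunflower property to a petal $\Edg\in\Petals$ and any second petal of $\Petals$ shows $\Core\subseteq\Edg$; since $\Core\notin\Petals$, we cannot have $\Core=\Edg$, so $\Core\subsetneq\Edg$. Consequently $|\Edg\setminus\Core|\le d-|\Core|$ for every $\Edg\in\Petals$, where $d-|\Core|\le d$ in general and $d-|\Core|\le d-1$ when $\Core\ne\emptyset$; this gap will be what produces the clean bound.

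The bookkeeping step is to let $Q$ be the set of all hyperedges that, when processed, increment $\petcount[\Core]$ in \autoref{lin:incC} (equivalently, pass the test of \autoref{lin:Hsuitable} for the core $\Core$). Two facts about $Q$ drive the proof. \emph{(i)} $|Q|\le k+1$: the value $\petcount[\Core]$ only ever increases, every increment in \autoref{lin:incC} occurs inside the block guarded by \autoref{lin:necess}, which requires $\petcount[\Core]\le k$ immediately before that increment, so $\petcount[\Core]$ never exceeds $k+1$, and $|Q|$ equals its final value. \emph{(ii)} A vertex~$v$ gets $\markused[\Core][v]\gets\btrue$ only while some $\Edg''\in Q$ is being processed, and then only for $v\in\Edg''\setminus\Core$; hence the set of vertices ever marked used for $\Core$ is contained in $\bigcup_{\Edg''\in Q}(\Edg''\setminus\Core)$.

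Next I would partition $\Petals$. Each $\Edg\in\Petals$ lies in $\Edgs'$, so it passed \autoref{lin:necess} and the loop of \autoref{lin:eachcore} ran; since $\Core\subseteq\Edg$, this loop considered the core $\Core$, at which point $\Edg$ either \emph{(a)} incremented $\petcount[\Core]$ or \emph{(b)} was rejected in \autoref{lin:Hsuitable} because some $v\in\Edg\setminus\Core$ already had $\markused[\Core][v]=\btrue$. Let $\Petals_a$ and $\Petals_b$ be the two resulting sets of petals; then $\Petals_a=\Petals\cap Q$, so $|\Petals_a|\le|Q|$. For $\Petals_b$, assign to each $\Edg\in\Petals_b$ a witness vertex $v_\Edg\in\Edg\setminus\Core$ that was already marked used for $\Core$ when $\Edg$ was processed, together with a hyperedge $f_\Edg\in Q$ processed strictly earlier with $v_\Edg\in f_\Edg\setminus\Core$ (such an $f_\Edg$ exists by fact \emph{(ii)}). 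The decisive point---and the only place where ``intersecting in \emph{exactly} $\Core$'' rather than ``at least $\Core$'' matters---is that the sets $\{\Edg'\setminus\Core:\Edg'\in\Petals\}$ are pairwise disjoint: this makes the $v_\Edg$ distinct across $\Petals_b$, and it forbids $v_\Edg$ from lying in $\Edg'\setminus\Core$ for any petal $\Edg'\ne\Edg$; since $f_\Edg\ne\Edg$, it follows that $f_\Edg\notin\Petals$. Thus $\Edg\mapsto v_\Edg$ injects $\Petals_b$ into $\bigcup_{\Edg''\in Q\setminus\Petals}(\Edg''\setminus\Core)$, so $|\Petals_b|\le|Q\setminus\Petals|\cdot(d-|\Core|)=(|Q|-|\Petals_a|)(d-|\Core|)$.

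Finally, writing $q=|Q|\le k+1$, $a=|\Petals_a|$, and $t=d-|\Core|\in\{1,\dots,d\}$, the two bounds combine to $|\Petals|=|\Petals_a|+|\Petals_b|\le a+(q-a)t$. Over $a\in\{0,\dots,q\}$ this is at most $q\le k+1$ when $t=1$ and at most $qt\le(k+1)d$ when $t\ge 2$, so in every case $|\Petals|\le d(k+1)$. I expect the main obstacle to be presenting the disjointness argument cleanly---especially the step that a hyperedge blocking a type-(b) petal of $\Petals$ cannot itself be a petal of $\Petals$---and keeping facts \emph{(i)} and \emph{(ii)} tight enough that the final arithmetic yields exactly $d(k+1)$ and not a weaker constant.
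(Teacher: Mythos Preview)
Your proof is correct and rests on the same two facts the paper uses: $\petcount[\Core]$ never exceeds $k+1$, and the sets $\Edg\setminus\Core$ for $\Edg\in\Petals$ are pairwise disjoint. The paper's argument is more direct, however: it does not partition $\Petals$ into $\Petals_a$ and $\Petals_b$, nor does it argue that $f_\Edg\notin\Petals$. Instead it simply observes that \emph{every} petal $\Edg\in\Petals$ ends up containing some vertex $v\in\Edg\setminus\Core$ with $\markused[\Core][v]=\btrue$ (either $\Edg$ passed \autoref{lin:Hsuitable} and marked its own vertices, or it failed because such a $v$ already existed), that at most $d(k+1)$ vertices are ever marked for core~$\Core$, and that by disjointness each marked vertex lies in at most one petal's private part---so $|\Petals|\le d(k+1)$ immediately. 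Your refinement that the marking hyperedge $f_\Edg$ cannot itself be a petal is a nice observation (and implicitly yields the sharper bound $(k+1)(d-|\Core|)$), but it is not needed for the stated lemma.
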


\begin{proof}
  Let $\Petals$ be a sunflower in~$\HGout$ with
  core~$\Core\notin\Petals$. Then, $|\Petals|\leq d(k+1)$ follows from
  the following two observations:

  (i) Every petal~$\Edg\in\Petals$ present in~$\HGout{}$ is copied
  from~$\HG$ in \autoref{lin:addH} of
  \autoref{alg:eff-sunflo}. Consequently, every petal~$\Edg\in\Petals$
  contains a vertex~$v$ satisfying $\markused[\Core][v]=\btrue$: if
  this condition is violated in \autoref{lin:Hsuitable}, then
  \autoref{lin:markused} applies ``$\markused[\Core][v]\gets\btrue$''
  to all vertices~$v\in\Edg\setminus\Core$.

  (ii) Whenever $\petcount[\Core]$ is incremented by one in
  \autoref{lin:incC}, then, in \autoref{lin:markused},
  ``$\markused[\Core][v]\gets\btrue$'' is applied to the at most
  $d$~vertices~$v\in\Edg$. Thus, since $\petcount[\Core]$ never
  exceeds~$k+1$, at most $d(k+1)$~vertices~$v$ satisfy
  $\markused[\Core][v]=\btrue$. Moreover, since, by
  \autoref{lin:markused}, no~$v\in C$ satisfies
  $\markused[\Core][v]=\btrue$ and the petals in~$\Petals$ pairwise
  intersect only in~$\Core$, it follows that at most $d(k+1)$~petals
  in~$\Petals$ contain vertices satisfying
  $\markused[\Core][v]=\btrue$.
\end{proof}
\noindent Having shown an upper bound on the size of the sunflowers in
the hypergraph output by \autoref{alg:eff-sunflo}, we now show that
the output hypergraph contains $\bigO(k^d)$~hyperedges.  To this end,
in a way similar to \citet[Lemma~9.7]{FG06}, we show the following
refined version of \citet{ER60}'s sunflower lemma.  Herein, recall
that a hypergraph is $\dslice$-uniform if and only if every hyperedge
has cardinality exactly~$\dslice$.

\begin{lemma}\label{lem:largeflowers}
  Let $\HG$~be an $\dslice$-uniform hypergraph and
  $\cone,\ctwo\in\mathbb N$ with $\cone\leq\dslice$ such that every
  pair of hyperedges in~$\HG$ intersects in at most
  $\dslice-\cone$~vertices.

  If $\HG$~contains more than
  $\dslice!\ctwo^{\dslice+1-\cone}$~hyperedges, then $\HG$ contains a
  sunflower with more than $\ctwo$ petals.
\end{lemma}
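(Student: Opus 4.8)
The plan is to follow the classical inductive proof of the sunflower lemma of \citet{ER60}, carried out for $\cone=1$ by \citet[Lemma~9.7]{FG06}, while threading the extra hypothesis that every two hyperedges intersect in at most $\dslice-\cone$ vertices through the induction. I would induct on $\dslice-\cone\ge 0$, keeping $\cone$ fixed and maintaining the invariant $\cone\le\dslice$. The base case is $\dslice=\cone$: then every two hyperedges intersect in at most $\dslice-\cone=0$ vertices, so $\Edgs$ consists of pairwise disjoint sets; since $\dslice!\,\ctwo^{\dslice+1-\cone}=\dslice!\,\ctwo\ge\ctwo$, there are more than $\ctwo$ of them, and any $\ctwo+1$ of them form a sunflower with empty core and more than $\ctwo$ petals.

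For the inductive step I assume $\cone<\dslice$ and pick a maximal family $\match\subseteq\Edgs$ of pairwise disjoint hyperedges. If $|\match|>\ctwo$, then $\match$ is itself a sunflower with empty core and more than $\ctwo$ petals. Otherwise $U:=\bigcup\match$ satisfies $|U|\le\dslice|\match|\le\dslice\ctwo$, and by maximality of $\match$ every hyperedge of $\Edgs$ meets $U$; hence by the pigeonhole principle some vertex $v\in U$ is contained in more than $\dslice!\,\ctwo^{\dslice+1-\cone}/(\dslice\ctwo)=(\dslice-1)!\,\ctwo^{(\dslice-1)+1-\cone}$ hyperedges. Let $\Edgs_v$ be this subfamily and put $\Edgs_v':=\{\,\Edg\setminus\{v\}:\Edg\in\Edgs_v\,\}$. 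The sets in $\Edgs_v'$ are pairwise distinct, so $|\Edgs_v'|=|\Edgs_v|$, and $\Edgs_v'$ is $(\dslice-1)$-uniform. For $\Edg_1,\Edg_2\in\Edgs_v$ we have $v\in\Edg_1\cap\Edg_2$, hence $|(\Edg_1\setminus\{v\})\cap(\Edg_2\setminus\{v\})|=|\Edg_1\cap\Edg_2|-1\le(\dslice-\cone)-1=(\dslice-1)-\cone$, so the intersection hypothesis (with the same $\cone$) is preserved and, since $(\dslice-1)-\cone\ge 0$, the induction hypothesis applies to $\Edgs_v'$. It yields a sunflower in $\Edgs_v'$ with more than $\ctwo$ petals and some core $\Core'$; adding $v$ back to each petal produces a sunflower in $\Edgs_v\subseteq\Edgs$ with core $\Core'\cup\{v\}$ and more than $\ctwo$ petals, which completes the induction.

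I do not expect a genuine obstacle here: the only structural observation is that deleting the shared vertex $v$ lowers the uniformity and the intersection bound each by exactly one while leaving $\cone$ unchanged, so the recursion stays within the hypotheses of the lemma. The points needing care are purely bookkeeping: verifying the exponent identity $\dslice!\,\ctwo^{\dslice+1-\cone}/(\dslice\ctwo)=(\dslice-1)!\,\ctwo^{(\dslice-1)+1-\cone}$, checking that $\cone\le\dslice-1$ when passing from $\dslice$ to $\dslice-1$ (which holds since $\cone<\dslice$ in the inductive step), and dispatching the trivial degenerate cases ($\dslice=0$ renders the statement vacuous, and a single hyperedge already counts as a sunflower with one petal).
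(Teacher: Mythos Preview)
Your proposal is correct and follows essentially the same inductive argument as the paper's proof (maximal disjoint family, pigeonhole on its union, delete the common vertex, recurse). You are in fact more careful than the paper in two places: you explicitly verify that removing the shared vertex~$v$ decreases the pairwise intersection bound by one so that the hypothesis with the same~$\cone$ still applies at uniformity~$\dslice-1$, and you note that the deleted sets remain pairwise distinct so no hyperedges are lost; the paper leaves both points implicit.
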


\noindent For $\cone=1$, we obtain the sunflower lemma stated by
\citet{FG06}.  For $\cone=2$, we will exploit it in
\autoref{sec:fewverts} to reduce the number of vertices in the output
hypergraph.

\begin{proof}
  We prove the lemma by induction on~$\ell$. As base case, consider
  $\dslice=\cone$. For~$\dslice=\cone$, all hyperedges in~$\HG$ are
  pairwise disjoint. Hence, if $\HG$~has more than
  $\ell!c^{\ell+1-b}$~hyperedges, then these form a sunflower with
  empty core and more than $\ell!c^{\ell+1-b}=\ell!c\geq
  c$~petals. That is, the lemma holds for $\dslice=\cone$.

  \looseness=-1 Now, assume that the lemma holds for some~$\dslice\geq\cone$. It
  remains to prove that it holds for~${\dslice+1}$. Let $\match$~be a
  maximal set of pairwise disjoint hyperedges of the
  $(\dslice+1)$-uniform hypergraph~$\HG:=(\Verts,\Edgs)$. If
  $|\match|>\ctwo$, then the lemma holds because~$M$ is a sunflower
  with empty core. Otherwise, for $\matchvs:=\bigcup_{\Edg\in \match}
  \Edg$, it holds that $|\matchvs|\leq(\dslice+1)\ctwo$ and some
  vertex $w\in \matchvs$ is contained in a set~$\Edgs_w$ of more than
  \begin{align*}
    \frac{|\Edgs|}{|\matchvs|} &\geq
    \frac{(\ell+1)!\ctwo^{\ell+2-\cone}} {(\dslice+1)\ctwo} =
    \ell!\ctwo^{\ell+1-\cone}\text{ hyperedges.}
  \end{align*}
  The hypergraph $\HG_w$ that contains for each
  hyperedge~$\Edg\in\Edgs_w$ the hyperedge~$\Edg\setminus\{w\}$ is an
  $\dslice$-uniform hypergraph and, by induction hypothesis, contains
  a sunflower~$\Petals$ with more than $\ctwo$~petals. Adding $w$ to
  each of the petals of~$\Petals$ yields a sunflower~$\Petals'$ with
  more than $\ctwo$~petals in~$\HG$.
\end{proof}

\noindent By combining \autoref{lem:few-petals} with
\autoref{lem:largeflowers}, we can easily show that the hypergraph
output by \autoref{alg:eff-sunflo} contains $\bigO(k^d)$~hyperedges.
Since we have already shown in \autoref{prop:corr} that the algorithm
is correct, it thereafter only remains to show that
\autoref{alg:eff-sunflo} runs in linear time in order to complete the
proof of \autoref{thm:optkern-lintime}.

\begin{proposition}\label{prop:small}
  The hypergraph~$\HGout{}$ returned by \autoref{alg:eff-sunflo} on
  input $\HG$ and~$k$ contains at most $d!\cdot d^{d+1} \cdot
  (k+1)^d$~hyperedges and $d$ times as many vertices.
\end{proposition}

\begin{proof}
  Obviously, $\HGout{}$ has at most $d$~times as many vertices as
  hyperedges, since the vertex set of~$\HGout{}$ is constructed as the
  union of its hyperedges in \autoref{lin:vertunion}
  of~\autoref{alg:eff-sunflo}.

  To bound the number of hyperedges, consider, for~$1\leq\dslice\leq
  d$, the $\dslice$-uniform
  hypergraph~$\HGout_{\dslice}=(\Verts_\dslice,\Edgs_\dslice)$
  comprising only the hyperedges of size~$\dslice$
  of~$\HGout{}$. If~$\HGout{}$ had more than $d!\cdot d^{d+1} \cdot
  (k+1)^d$ hyperedges, then, for some $\dslice\leq d$,
  $\HGout{}_{\dslice}$~would have more than $d!\cdot d^d \cdot
  (k+1)^d$ hyperedges.  \autoref{lem:largeflowers} with $\cone=1$
  and~$\ctwo=d(k+1)$ states that, if $\HGout{}_{\dslice}$ had more
  than $\dslice!\cdot d^\dslice \cdot (k+1)^\dslice$ hyperedges, then
  $\HGout{}_{\dslice}$ would contain a sunflower~$\Petals$ with
  core~$\Core$ and more than $d(k+1)$ petals.  Obviously,
  $\Core\notin\Petals$, since all petals have
  cardinality~$\dslice$. Moreover, this sunflower would also exist in
  the supergraph~$\HGout{}$ of~$\HGout{}_{\dslice}$, contradicting
  \autoref{lem:few-petals}.
\end{proof}

\subsection{Running time}\label{sec:algfast}

\looseness=-1 Since \autoref{prop:corr} has shown that
\autoref{alg:eff-sunflo} is correct and \autoref{prop:small} has shown
that the output hypergraph contains $\bigO(k^d)$~hyperedges, to prove
\autoref{thm:optkern-lintime}, it remains to show that
\autoref{alg:eff-sunflo} runs in linear time.  In order to implement
the algorithm efficiently, we need data structures that allow us to
quickly look up the values~$\petcount[\Core]$ and~$\markused[\Core]$
for some vertex set~$\Core$ of size~at~most~$d$.

The usual approach to realize table look-ups for subsets of some universe
of size~$\gamma$ in $\bigO(\gamma)$~time is representing the subsets
as bitstrings of length~$\gamma$ and looking up these in a trie
\citep[Section~5.3]{AHU83}. However, here, our universe is the set of
vertices of the input hypergraph and, thus, has size~$n$. Hence, this
method would yield table look-ups in $\bigTh(n)$~time, which is too
slow to prove that \autoref{alg:eff-sunflo} runs in linear time.  For
this reason, we will not represent vertex subsets of size at most~$d$
as bitstrings, but uniquely represent them as sorted sequences of at
most $d$~integers. Then, we will exploit the following lemma.

\begin{lemma}\label{lem:stringtrie}
  Let $L$~be a list of sequences of length at most~$d$ of integers
  in~$[n]$.

  \nopagebreak In $\bigO(d\cdot n+d\cdot |L|)$~time, we can compute an
  associative array~$A[]$ such that, for each sequence~$s$ in~$L$,
  accessing the value~$A[s]$ and storing a value to~$A[s]$ works in
  $\bigO(d)$~time.
\end{lemma}

\begin{proof}
  We use a trie to associate values with sequences in~$L$.  However,
  the trie will be too large to initialize it fully in linear time.
  We have to show that we can create the trie so that, for a look-up
  of a value for any sequence~$s$ in~$L$, no uninitialized memory
  cells are read.
  \begin{figure}
    \centering \tikzset{ trienode/.style={%
        draw, rectangle, rectangle split, rectangle split parts=5,
        rectangle split part align=top,
        rectangle split ignore empty parts=false, font=\scriptsize,
        inner sep=2pt,
        every one node part/.style={text width=0.5cm}, every two node
        part/.style={text width=0.5cm}, every three node
        part/.style={text width=0.5cm}, every four node
        part/.style={text width=0.5cm}, every five node
        part/.style={text width=0.5cm}, append after command={%
          \pgfextra{\let\mainnode=\tikzlastnode}%
          coordinate (c1 \mainnode) at
          ($(\mainnode.north)!.5!(\mainnode.one split)+(0.1cm,0)$)
          coordinate (c2 \mainnode) at ($(\mainnode.one
          split)!.5!(\mainnode.two split)+(0.1cm,0)$) coordinate (c3
          \mainnode) at ($(\mainnode.two split)!.5!(\mainnode.three
          split)+(0.1cm,0)$) coordinate (c4 \mainnode) at
          ($(\mainnode.three split)!.5!(\mainnode.four
          split)+(0.1cm,0)$) coordinate (c5 \mainnode) at
          ($(\mainnode.four split)!.5!(\mainnode.south)+(0.1cm,0)$) }
      } }

    \tikzset{ datanode/.style={%
        draw, rectangle, rectangle split, rectangle split parts=2,
        rectangle split part align=top,
        font=\scriptsize, inner sep=2pt, every one node
        part/.style={text width=1cm}, every two node part/.style={text
          width=1cm}, append after command={%
          \pgfextra{\let\mainnode=\tikzlastnode}%
          coordinate (c1 \mainnode) at
          ($(\mainnode.north)!.5!(\mainnode.one split)+(0.25cm,0)$)
          coordinate (c2 \mainnode) at ($(\mainnode.one
          split)!.5!(\mainnode.south)+(0.25cm,0)$) } } }

  \begin{tikzpicture}
    \tikzstyle{pointer} = [->, shorten >= 2pt] \tikzstyle{pointertail}
    = [fill,circle,minimum size=3pt,inner sep=0pt]
    \tikzstyle{nullpointer} = [draw,circle,minimum size=3pt,inner
    sep=0pt] \node[trienode] (root) {\nodepart{one}$1$ \nodepart{two}
      $2$ \nodepart{three} $3$ \nodepart{four} $4$ \nodepart{five}
      $5$};

    \node[datanode] (d1) at (1.5,1) {\nodepart{one} data
      \nodepart{two} trie};

    \node[trienode] (r2) at (3,1) {\nodepart{one}$1$ \nodepart{two}
      $2$ \nodepart{three} $3$ \nodepart{four} $4$ \nodepart{five}
      $5$};

    \node[datanode] (d2) at (4.5,1) {\nodepart{one} data
      \nodepart{two} trie};

    \node[trienode] (r3) at (6,1) {\nodepart{one}$1$ \nodepart{two}
      $2$ \nodepart{three} $3$ \nodepart{four} $4$ \nodepart{five}
      $5$};

    \node[datanode] (d3) at (7.5,1.5) {\nodepart{one} data
      \nodepart{two} trie};

    \node[datanode] (d4) at (7.5,0.5) {\nodepart{one} data
      \nodepart{two} trie};
    
    \node[datanode] (d6) at (1.5,-1) {\nodepart{one} data
      \nodepart{two} trie};

    \node[trienode] (r4) at (3,-1) {\nodepart{one}$1$ \nodepart{two}
      $2$ \nodepart{three} $3$ \nodepart{four} $4$ \nodepart{five}
      $5$};

    \node[trienode] (r5) at (6,-1) {\nodepart{one}$1$ \nodepart{two}
      $2$ \nodepart{three} $3$ \nodepart{four} $4$ \nodepart{five}
      $5$};

    \node[datanode] (d7) at (4.5,-1) {\nodepart{one} data
      \nodepart{two} trie};

    \node[datanode] (d8) at (7.5,-1) {\nodepart{one} data
      \nodepart{two} trie};

    \draw[pointer] (c1 root)node[pointertail]{}|-(d1.west);

    \draw[pointer] (c2 d1) node[pointertail]{}--++(right:5mm)
    |-(r2.west);

    \draw[pointer] (c2 r2)node[pointertail]{}--++(right:5mm)
    |-(d2.west);

    \draw[pointer] (c2 d2)node[pointertail]{}--++(right:5mm)
    |-(r3.west);

    \draw[pointer] (c3 r3)node[pointertail]{}--++(right:5mm)
    |-(d3.west);

    \draw[pointer] (c4 r3)node[pointertail]{}--++(right:5mm)
    |-(d4.west);

    \draw[pointer] (c2 root)node[pointertail]{}--++(right:5mm)
    |-(d6.west);

    \draw[pointer] (c2 d6)node[pointertail]{}--++(right:5mm)
    |-(r4.west);

    \draw[pointer] (c3 r4)node[pointertail]{}--++(right:5mm)
    |-(d7.west);

    \draw[pointer] (c2 d7)node[pointertail]{}--++(right:5mm)
    |-(r5.west);

    \draw[pointer] (c5 r5)node[pointertail]{}--++(right:5mm)
    |-(d8.west);

    \draw[pointer, shorten >=5pt] (c1 d2) node[pointertail]{} -- ($(c1
    d2)+(0,0.5cm)$) node {5};

    \draw[pointer, shorten >=5pt] (c1 d3) node[pointertail]{} -- ($(c1
    d3)+(1cm,0)$) node {10};

    \draw[pointer, shorten >=5pt] (c1 d4) node[pointertail]{} -- ($(c1
    d4)+(1cm,0)$) node {8};

    \draw[pointer, shorten >=5pt] (c1 d8) node[pointertail]{} -- ($(c1
    d8)+(1cm,0)$) node {2};
  \end{tikzpicture}
  \caption{A trie that associates integer values with sequences of
    integers in~$\{1,\dots,5\}$. That is, each node of the trie is an
    array of size five. With $(1,2)$ the trie associates~5,
    with~$(1,2,3)$ it associates~10, with~$(1,2,4)$ it associates~$8$,
    and, finally, with~$(2,3,5)$ it associates~$2$.}
  \label{fig:trie}
\end{figure}
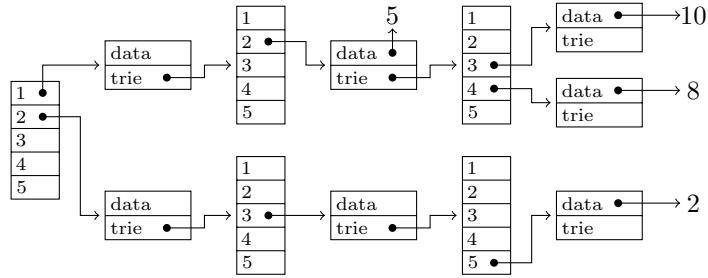

We define a \emph{trie} as a size-$\vernum{}$ array, of which each
cell contains a pointer to a structure consisting of two more
pointers: one of them points to data, the other one to another
trie. This is illustrated in \autoref{fig:trie}. A look-up of the
value associated with a sequence~$s=(s_1,\dots,s_d)$ in a trie~$T_1$
then works in $\bigO(d)$~time as follows: for $i\in[d-1]$, we get the
trie~$T_{i+1}$ pointed to by~$T_i[s_i]$. Then, from $T_{d}[s_{d}]$, we
get a pointer to the data associated with~$s$.

In the creation of the trie to associate values with the sequences
in~$L$, we face a problem: we do not have enough time to initialize all
cells of all arrays that implement the inner nodes of the trie: this
would take $\bigTh(\vernum{})$~time per node and, as seen in
\autoref{fig:trie}, the number of nodes required in the trie can be
more than~$|L|$.  This is a problem since, when creating the trie, we
do not know whether an array cell already contains a pointer to a trie
node of the next level or whether we have to create such a pointer
with a corresponding new node.  We have to make sure that we only
follow pointers of initialized cells and that we do not overwrite
previously correctly set up pointers since, otherwise, information in
subtries will be lost. We achieve this as follows:

The input list~$L$ contains sequences of length at most~$d$ of
integers in~$[n]$. Hence, we can sort~$L$ lexicographically in
$\bigO(d\cdot(\vernum{}+|L|))=\bigO(d\cdot\vernum{}+d\cdot|L|)$~time
using radix sort~\citep[Section~8.3]{CLRS01}. We construct the trie by
iterating over~$L$ once.  For each sequence~$p$ in~$L$, we find in
$\bigO(d)$~time the first position~$i$ in which the sequence~$p$
differs from its predecessor sequence in the lexicographically sorted
list~$L$.  This tells us that we already created all nodes on the path
from the trie's root node to the leaf corresponding to~$s$ up to a
depth of~$i$. Pointers up to this depth~$i$ are valid and may not be
overwritten. Nodes and pointers beyond this depth have to be newly
created.
\end{proof}

\noindent Using \autoref{lem:stringtrie}, we can finally prove that
\autoref{alg:eff-sunflo} runs in linear time.  Note that, together
with Propositions~\ref{lem:kern-correct} and~\ref{prop:small},
\autoref{lem:kern-schnell} completes the proof of
\autoref{thm:optkern-lintime}.

\begin{proposition}\label{lem:kern-schnell}
  \autoref{alg:eff-sunflo} can be implemented to run in
  $\bigO(d\cdot\vernum{}+2^dd\cdot \edgenum{})$ time.
\end{proposition}

\begin{proof}
  We first describe how \autoref{lem:stringtrie} helps us efficiently
  implementing the associative arrays~$\petcount[]$ and~$\markused[]$
  required by \autoref{alg:eff-sunflo}.  To this end, we assume that
  every vertex is represented as an integer in~$[n]$ and that every
  hyperedge is represented as a sequence sorted by increasing vertex
  numbers, which we call \emph{sorted hyperedge}.  We can initially
  sort each hyperedge of~$\HG$ in $\bigO(\edgenum{}\cdot d\log
  d)$~total time. Note that, on hyperedges represented as sorted
  sequences, the set subtraction operation needed in
  \autoref{lin:Hsuitable} can be executed in $\bigO(d)$~time such that
  the resulting set is again
  sorted~\citep[Section~4.4]{AHU83}. Moreover, we can generate all
  subsets of a sorted set such that the resulting subsets are
  sorted. Hence, we may assume to always deal with sorted hyperedges
  as a unique representation of hyperedges.

  We now apply \autoref{lem:stringtrie}.  Observe that
  \autoref{alg:eff-sunflo} looks up~$\petcount[C]$ and~$\markused[C]$
  only for sets~$C\subseteq e$ for some hyperedge~$e$.  Thus, from the
  set of sorted hyperedges, in $\bigO(2^dd\cdot\edgenum{})$~time, we
  compute a length-$(2^d\cdot\edgenum{})$ list~$L$ of all possible
  sets~$\Core\subseteq\Edg$ for all hyperedges~$\Edg$ and use this
  list in \autoref{lem:stringtrie} to create the associative arrays
  $\petcount[]$ and $\markused[]$ in $\bigO(d\cdot
  \vernum{}+d\cdot|L|)=\bigO(d\cdot\vernum+2^dd\cdot\edgenum{})$~time.

  Now, we can implement lines \ref{lin:initstart}--\ref{lin:initend}
  of \autoref{alg:eff-sunflo} to run in
  $\bigO(d\cdot\vernum{}+2^dd\cdot\edgenum{})$~time, observing that
  the loop in \autoref{lin:setfalse} can be implemented to run in
  $\bigO(d)$-time, since only one look-up to~$\markused[\Core]$ is
  needed to obtain a pointer to an array in which, then,
  $\bigO(d)$~values are set.

  The for-loop in \autoref{lin:eachedge} iterates
  $\edgenum{}$~times. Its body works in $\bigO(2^dd)$~time: obviously,
  this time bound holds for lines \ref{lin:necess} and \ref{lin:addH};
  it remains to show that the body of the for-loop in
  \autoref{lin:eachcore} works in $\bigO(d)$~time. This is easy to see
  if one considers that, in lines \ref{lin:Hsuitable} and
  \ref{lin:markused}, one only has to do one look-up to
  $\markused[\Core]$ to find a pointer to an array that holds the
  values for the at most $d$~vertices of a hyperedge. Also
  \autoref{lin:vertunion} works in linear time by first initializing
  all entries of an array~$\verts[]$ of size~$\vernum{}$ to
  ``$\bfalse$'' and then, for each output hyperedge~$\Edg$ and each
  vertex~$v\in\Edg$, setting ``$\verts[v]\gets\btrue$'' in
  $\bigO(d)$~time. Afterward, we can build the vertex set~$V'$ of the
  output hypergraph~$\HGout{}$ using the vertices~$v$ for which
  $\verts[v]=\btrue$. This takes
  $\bigO(\vernum{}+d\cdot\edgenum{})$~time.
\end{proof}

\section{Experimental evaluation}\label{sec:hsexp}
This section experimentally evaluates the linear-time kernelization
algorithm from \autoref{sec:expl}.  We demonstrate to which size our
algorithm can process instances within five minutes.

\paragraph{Implementation details.}
Our implementation of \autoref{alg:eff-sunflo} comprises about
700~lines of C++ and is freely
available.\footnote{\url{http://fpt.akt.tu-berlin.de/hslinkern/}} The
experiments were run on a computer with a 3.6\,GHz Intel Xeon
processor and 64\,GB RAM under Linux~3.2.0, where the C++ source code
has been compiled using the GNU C++ compiler in version~4.7.2 and
using the highest optimization level~(-O3).

Given a hypergraph~$\HG=(\Verts,\Edgs)$, our implementation of
\autoref{alg:eff-sunflo}
checks for each hyperedge~$\Edg\in\Edgs$ with $\ell:=|\Edg|$
independently whether it is a \emph{large
  hyperedge}~($2^\ell>\edgenum{}$) or a \emph{small hyperedge}
($2^\ell\leq\edgenum{}$). For a small hyperedge~$\Edg$,
\autoref{alg:eff-sunflo} chooses to consider all
subsets~$\Core\in\Edg$ as possible cores of sunflowers in
\autoref{lin:necess}. For a large hyperedge~$e$, all
subsets~$\Edg\cap\Edg'$ for any~$\Edg'\in\Edgs$ are considered as
possible cores instead. Hence, our implementation chooses the variant
which promises the lower running time for each hyperedge
independently.

Additionally to discarding hyperedges that contain some core of a
sunflower of size~$k+1$, our implementation of
\autoref{alg:eff-sunflo} also makes sure that the output hypergraph
contains no pair of hyperedges such that one is a superset of the
other.  To this end, the implementation initially sorts all hyperedges
by increasing cardinality in $\bigO(d+m)$~time using counting
sort~\citep[Section~8.2]{CLRS01}.  Moreover, after adding a
hyperedge~$\Edg{}$ to the output hypergraph, the implementation
sets~$\petcount[\Edg{}]$ to~$k+1$: in this way, the algorithm will not
add hyperedges to the output hypergraph that are supersets of already
added hyperedges.

As data structures to hold the values $\markused[C]$ and
$\petcount[C]$ used by \autoref{alg:eff-sunflo} to associate values
with sets~$C$ of size at most~$d$, we implemented the following
variants.
\begin{description}
\item[By \emph{malloc trie},] we refer to the associative array
  created in \autoref{lem:stringtrie}. It is implemented as a trie
  whose nodes are allocated as uninitialized arrays in constant time
  using the C-routine \texttt{malloc}. It guarantees
  $\bigO(d)$~look-up time and, as described in the proof of
  \autoref{lem:kern-schnell},
  $\bigO(d\cdot\vernum{}+2^dd\cdot\edgenum{})$~creation time. However,
  $\Omega(\vernum{}\cdot \edgenum{})$ random access memory cells may
  need to be reserved by the program in the worst case, although at
  most $\bigO(d\cdot\vernum{}+2^dd\cdot\edgenum{})$ memory cells are
  actually accessed.
\item[By \emph{calloc trie},] we refer to a trie whose nodes are
  allocated as arrays pre-initialized by zero using the C-routine
  \texttt{calloc}. This makes the intricate initialization by
  \autoref{lem:stringtrie} unnecessary. However, the running time of
  acquiring a zero-initialized array and its actual memory usage may
  vary depending on the implementation of the routine by the used
  C~library. For na\"{i}ve implementations of \texttt{calloc},
  creation time and memory usage of the calloc tree could be
  $\Omega(\vernum{}\cdot\edgenum{})$ in the worst case. However, we
  can still guarantee $\bigO(d)$~look-up time.
\item[By \emph{hash table},] we refer to an associative array
  implemented using the data structure~\texttt{unordered\_map}
  provided in the C++11 Standard Template Library. At most
  $\bigO(2^d\cdot\edgenum{})$~values are stored in the hash
  table. According to the C++11 reference, storage and look-up work in
  $\bigO(2^d\cdot\edgenum{})$~time in the worst case, but in
  $\bigO(d)$~time in the average case, where $\bigO(d)$~time accounts
  for computing the hash value of a hyperedge of cardinality~$d$.
\item[By \emph{balanced tree},] we refer to an associative array
  implemented using the \texttt{map} data structure provided by the
  C++11 Standard Template Library.  According to the C++11 reference,
  it is usually implemented as a balanced binary tree.  Since, in the
  worst case, $\bigO(2^d\cdot\edgenum{})$~values are stored in the
  tree, the C++11 reference guarantees $\bigO(d+\log\edgenum{})$~time
  for storage and look-up. Its memory requirements are
  $\bigO(2^d\edgenum{})$.
\end{description}

\paragraph{Data.} We execute our experiments on hypergraphs generated
from the \textsc{Golomb Subruler} problem: one gets as input a
set~$R\subseteq\mathbb N$ and wants to remove at most $k$~numbers
(``marks'') from~$R$ such that the result is a Golomb ruler, that is,
no pair of remaining marks has the same distance as another pair.  The
applications of Golomb rulers lie, among others, in radio frequency
allocation~\citep{Dra09}.  Optimum solutions for \textsc{Golomb
  Subruler} are only known for~$R=[n]$ with~$n\leq 553$ at the current
time.\footnote{\url{http://blogs.distributed.net/2014/02/}}

From a \textsc{Golomb Subruler} instance, we obtain a \emph{conflict
  hypergraph} as follows: the vertex set is~$R$, and for each
$a,b,c,d\in R$, create a hyperedge~$\{a,b,c,d\}$ if
$|a-b|=|c-d|$. Asking for a hitting set of size~$k$ in this conflict
hypergraph is equivalent to \textsc{Golomb Subruler} \citep{SMNW14}.
As shown by \citet{SMNW14}, the class of conflict hypergraphs for
$R=[n]$ has $n$~vertices and $\Theta(n^3)$~hyperedges, their
cardinality being three or four. Our data set consists of the conflict
hypergraphs for \textsc{Golomb Subruler} instances~$R=[n]$ with
$100\leq n\leq 600$, which yields conflict hypergraphs with $10^5$ to
$2\cdot 10^7$~hyperedges. Since, in this way, we obtain a whole family
of growing hypergraphs, this data set is well-suited to show the
running time and memory scalability of \autoref{alg:eff-sunflo}.

\paragraph{Experimental setup.} 
\autoref{alg:eff-sunflo} requires as input not only a
hypergraph~$\HG{}$ but also an upper bound~$k$ on the size of a sought
hitting set. We choose as~$k$ an upper bound on the size of a
\emph{minimum} hitting set, so that the kernelization algorithm will
not output small trivial no-instances and so that the computed problem
kernel will retain all minimum hitting sets.

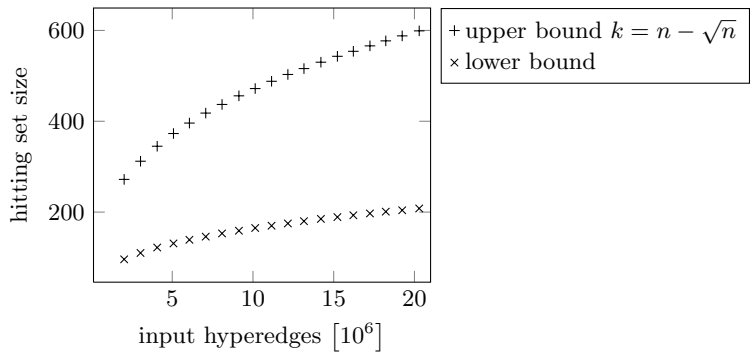
\begin{figure}\centering\small
  \begin{tikzpicture}
    \begin{axis}[ylabel={hitting set size}, xlabel={input hyperedges},
      x unit=10^6, change x base, axis base prefix={axis x base -6
        prefix {}}, legend cell align=left, legend pos=outer north
      east, width=0.5\textwidth,xmax=21000000]

      \addplot[color=black,mark=+,only marks] table [x=In, y=UBo, col
      sep=comma, skip coords between index={0}{19}]
      {FILE,In,UBo,LBo,Out,T1,T2,T3,T4,T5,Mem
hslinkern_balanced:gol106:1,100647,95,35,98792,0.88,0.87,0.88,0.88,0.9,123285504
hslinkern_balanced:gol134:1,202742,122,45,199772,1.98,1.98,1.99,1.99,2,233385984
hslinkern_balanced:gol153:1,301378,140,51,297502,3.12,3.15,3.13,3.12,3.13,340340736
hslinkern_balanced:gol169:1,405790,156,56,401058,4.4,4.38,4.41,4.38,4.39,452538368
hslinkern_balanced:gol182:1,506506,168,61,501016,5.61,5.61,5.63,5.59,5.61,561590272
hslinkern_balanced:gol193:1,603728,179,64,597552,6.87,6.87,6.87,6.83,6.84,666447872
hslinkern_balanced:gol204:1,712657,189,68,705755,8.17,8.24,8.24,8.23,8.21,784936960
hslinkern_balanced:gol213:1,810953,198,71,803427,9.53,9.46,9.51,9.59,9.61,890843136
hslinkern_balanced:gol222:1,917896,207,74,909719,10.85,10.97,10.89,10.91,10.82,1007235072
hslinkern_balanced:gol229:1,1007285,213,76,998583,12.16,12.05,12.04,12.04,12.02,1103704064
hslinkern_balanced:gol237:1,1116339,221,79,1107017,13.44,13.49,13.47,13.42,13.47,1222193152
hslinkern_balanced:gol244:1,1217987,228,81,1208105,14.94,15,14.98,15.06,15.03,1331245056
hslinkern_balanced:gol251:1,1325625,235,84,1315167,16.34,16.36,16.3,16.3,16.28,1448685568
hslinkern_balanced:gol257:1,1422784,240,86,1411819,17.75,17.74,17.83,17.65,17.81,1553543168
hslinkern_balanced:gol263:1,1524578,246,88,1513094,19.17,19.04,19.21,19.05,19.24,1663643648
hslinkern_balanced:gol269:1,1631115,252,90,1619100,20.55,20.64,20.72,20.45,20.68,1778987008
hslinkern_balanced:gol274:1,1723597,257,91,1711130,21.74,21.86,21.89,22.04,21.91,1878601728
hslinkern_balanced:gol279:1,1819510,262,93,1806583,23.11,23.2,23.21,23.14,23.15,1983459328
hslinkern_balanced:gol284:1,1918917,267,95,1905522,24.5,24.55,24.51,24.62,24.52,2090414080
hslinkern_balanced:gol289:1,2021880,272,96,2008008,26.15,26.04,26.11,26.19,26.04,2202611712
hslinkern_balanced:gol331:1,3035725,312,110,3017520,41.11,40.66,40.65,40.73,40.56,3300470784
hslinkern_balanced:gol365:1,4068883,345,122,4046740,56.37,56.4,56.2,56.27,56.11,4420349952
hslinkern_balanced:gol393:1,5077478,373,131,5051802,71.74,71.81,71.71,71.9,71.96,5512966144
hslinkern_balanced:gol417:1,6064344,396,139,6035432,87.34,87.33,87.27,86.83,87.25,6583562240
hslinkern_balanced:gol439:1,7074430,418,146,7042383,103.11,103.55,103.25,103.61,103.09,7678275584
hslinkern_balanced:gol459:1,8084845,437,153,8049808,120.56,120.24,120.73,119.72,119.98,8772988928
hslinkern_balanced:gol478:1,9129800,456,159,9091799,136.12,135.91,136.01,136.79,136.22,9906499584
hslinkern_balanced:gol495:1,10137868,472,165,10097113,154.21,154.27,153.41,153.35,153.07,10999115776
hslinkern_balanced:gol511:1,11152000,488,170,11108565,177.1,171.53,177.27,169.65,176.72,12099072000
hslinkern_balanced:gol526:1,12162172,503,175,12116147,186.9,190.63,187.16,191.04,187.27,13193785344
hslinkern_balanced:gol540:1,13158405,516,180,13109895,205.3,205.06,205.03,205.48,204.99,14274867200
hslinkern_balanced:gol554:1,14207607,530,185,14156547,223.16,226.4,223.38,225.03,222.48,15411523584
hslinkern_balanced:gol567:1,15230494,543,189,15177007,240.5,242.09,241.02,241.02,239.52,16521965568
hslinkern_balanced:gol579:1,16217235,554,193,16161458,256.97,258.12,257.45,257.75,257.35,17591513088
hslinkern_balanced:gol591:1,17245700,566,197,17187585,276.13,277.29,276.94,275.85,277.45,18707197952
hslinkern_balanced:gol602:1,18225851,577,201,18165551,293.66,295.5,293.94,294.82,293.32,19770454016
hslinkern_balanced:gol613:1,19242453,588,204,19179927,311.34,311.62,311.05,312.02,311.28,20872507392
hslinkern_balanced:gol624:1,20296172,599,208,20231380,330.55,331.72,331.73,330.45,331.39,22015455232
}; \addlegendentry{upper bound $k=n-\sqrt n$};

      \addplot[color=black,mark=x,only marks] table [x=In, y=LBo, col
      sep=comma, skip coords between index={0}{19}]
      {FILE,In,UBo,LBo,Out,T1,T2,T3,T4,T5,Mem
hslinkern_balanced:gol106:1,100647,95,35,98792,0.88,0.87,0.88,0.88,0.9,123285504
hslinkern_balanced:gol134:1,202742,122,45,199772,1.98,1.98,1.99,1.99,2,233385984
hslinkern_balanced:gol153:1,301378,140,51,297502,3.12,3.15,3.13,3.12,3.13,340340736
hslinkern_balanced:gol169:1,405790,156,56,401058,4.4,4.38,4.41,4.38,4.39,452538368
hslinkern_balanced:gol182:1,506506,168,61,501016,5.61,5.61,5.63,5.59,5.61,561590272
hslinkern_balanced:gol193:1,603728,179,64,597552,6.87,6.87,6.87,6.83,6.84,666447872
hslinkern_balanced:gol204:1,712657,189,68,705755,8.17,8.24,8.24,8.23,8.21,784936960
hslinkern_balanced:gol213:1,810953,198,71,803427,9.53,9.46,9.51,9.59,9.61,890843136
hslinkern_balanced:gol222:1,917896,207,74,909719,10.85,10.97,10.89,10.91,10.82,1007235072
hslinkern_balanced:gol229:1,1007285,213,76,998583,12.16,12.05,12.04,12.04,12.02,1103704064
hslinkern_balanced:gol237:1,1116339,221,79,1107017,13.44,13.49,13.47,13.42,13.47,1222193152
hslinkern_balanced:gol244:1,1217987,228,81,1208105,14.94,15,14.98,15.06,15.03,1331245056
hslinkern_balanced:gol251:1,1325625,235,84,1315167,16.34,16.36,16.3,16.3,16.28,1448685568
hslinkern_balanced:gol257:1,1422784,240,86,1411819,17.75,17.74,17.83,17.65,17.81,1553543168
hslinkern_balanced:gol263:1,1524578,246,88,1513094,19.17,19.04,19.21,19.05,19.24,1663643648
hslinkern_balanced:gol269:1,1631115,252,90,1619100,20.55,20.64,20.72,20.45,20.68,1778987008
hslinkern_balanced:gol274:1,1723597,257,91,1711130,21.74,21.86,21.89,22.04,21.91,1878601728
hslinkern_balanced:gol279:1,1819510,262,93,1806583,23.11,23.2,23.21,23.14,23.15,1983459328
hslinkern_balanced:gol284:1,1918917,267,95,1905522,24.5,24.55,24.51,24.62,24.52,2090414080
hslinkern_balanced:gol289:1,2021880,272,96,2008008,26.15,26.04,26.11,26.19,26.04,2202611712
hslinkern_balanced:gol331:1,3035725,312,110,3017520,41.11,40.66,40.65,40.73,40.56,3300470784
hslinkern_balanced:gol365:1,4068883,345,122,4046740,56.37,56.4,56.2,56.27,56.11,4420349952
hslinkern_balanced:gol393:1,5077478,373,131,5051802,71.74,71.81,71.71,71.9,71.96,5512966144
hslinkern_balanced:gol417:1,6064344,396,139,6035432,87.34,87.33,87.27,86.83,87.25,6583562240
hslinkern_balanced:gol439:1,7074430,418,146,7042383,103.11,103.55,103.25,103.61,103.09,7678275584
hslinkern_balanced:gol459:1,8084845,437,153,8049808,120.56,120.24,120.73,119.72,119.98,8772988928
hslinkern_balanced:gol478:1,9129800,456,159,9091799,136.12,135.91,136.01,136.79,136.22,9906499584
hslinkern_balanced:gol495:1,10137868,472,165,10097113,154.21,154.27,153.41,153.35,153.07,10999115776
hslinkern_balanced:gol511:1,11152000,488,170,11108565,177.1,171.53,177.27,169.65,176.72,12099072000
hslinkern_balanced:gol526:1,12162172,503,175,12116147,186.9,190.63,187.16,191.04,187.27,13193785344
hslinkern_balanced:gol540:1,13158405,516,180,13109895,205.3,205.06,205.03,205.48,204.99,14274867200
hslinkern_balanced:gol554:1,14207607,530,185,14156547,223.16,226.4,223.38,225.03,222.48,15411523584
hslinkern_balanced:gol567:1,15230494,543,189,15177007,240.5,242.09,241.02,241.02,239.52,16521965568
hslinkern_balanced:gol579:1,16217235,554,193,16161458,256.97,258.12,257.45,257.75,257.35,17591513088
hslinkern_balanced:gol591:1,17245700,566,197,17187585,276.13,277.29,276.94,275.85,277.45,18707197952
hslinkern_balanced:gol602:1,18225851,577,201,18165551,293.66,295.5,293.94,294.82,293.32,19770454016
hslinkern_balanced:gol613:1,19242453,588,204,19179927,311.34,311.62,311.05,312.02,311.28,20872507392
hslinkern_balanced:gol624:1,20296172,599,208,20231380,330.55,331.72,331.73,330.45,331.39,22015455232
}; \addlegendentry{lower bound};
    \end{axis}
  \end{tikzpicture}
  \caption{Upper and lower bounds for the minimum hitting set sizes
    for the data set obtained from the \textsc{Golomb Subruler}
    problem. The number of vertices~$n$ in the input instances is
    omitted, as it almost coincides with the upper bound~$k=n-\sqrt n$
    of the hitting set size. The lower bound was obtained from a
    maximal set of pairwise disjoint hyperedges.}
  \label{fig:hs-bounds}
\end{figure}

To obtain this upper bound for the \hs4 instances that we obtain from
\textsc{Golomb Subruler}, we exploit that, for all $n\leq 4.2\cdot
10^9$, \citet[Theorem~6.1]{Dim02} verified that there is a Golomb
ruler~$R\subseteq[n]$ with strictly more than $\sqrt{n}$~marks. Hence,
in our experiments with $n\leq 600$, a conflict hypergraph of a
\textsc{Golomb Subruler} instance~$R=[n]$ has a hitting set of size at
most~$k:=\lfloor n-\sqrt{n}\rfloor$. We use this~$k$ to compute
problem kernels for \hs 4.  \autoref{fig:hs-bounds} shows this upper
bound together with a lower bound.

\paragraph{Experimental results.}
In all plots to be shown, each point has been obtained from a single
run of our algorithm; the running times and memory usage are not
averaged in any way.

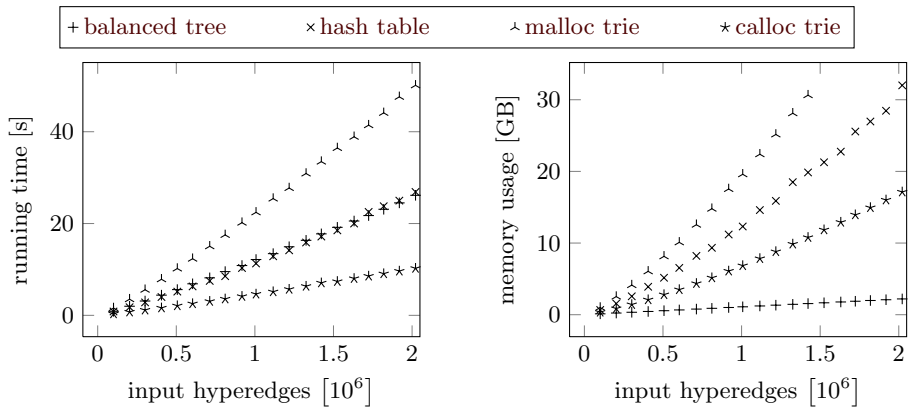
\begin{figure}\small\center
  \ref{legends}
  \begin{tikzpicture}
    \begin{axis}[axis base prefix={axis x base -6 prefix {}}, y
      unit=s, x unit=10^6, xlabel=input hyperedges, ylabel=running
      time, width=0.5\textwidth, legend cell align=left, legend
      pos=north west, change x base, axis base prefix={axis x base -6
        prefix {}}, xmax=2050000, legend to name=legends, legend
      columns=-1, legend style={/tikz/every even column/.append
        style={column sep=1cm}}]
      
      \addplot[color=black,mark=+, only marks] table [x=In, y=T1, col
      sep=comma] {FILE,In,UBo,LBo,Out,T1,T2,T3,T4,T5,Mem
hslinkern_balanced:gol106:1,100647,95,35,98792,0.88,0.87,0.88,0.88,0.9,123285504
hslinkern_balanced:gol134:1,202742,122,45,199772,1.98,1.98,1.99,1.99,2,233385984
hslinkern_balanced:gol153:1,301378,140,51,297502,3.12,3.15,3.13,3.12,3.13,340340736
hslinkern_balanced:gol169:1,405790,156,56,401058,4.4,4.38,4.41,4.38,4.39,452538368
hslinkern_balanced:gol182:1,506506,168,61,501016,5.61,5.61,5.63,5.59,5.61,561590272
hslinkern_balanced:gol193:1,603728,179,64,597552,6.87,6.87,6.87,6.83,6.84,666447872
hslinkern_balanced:gol204:1,712657,189,68,705755,8.17,8.24,8.24,8.23,8.21,784936960
hslinkern_balanced:gol213:1,810953,198,71,803427,9.53,9.46,9.51,9.59,9.61,890843136
hslinkern_balanced:gol222:1,917896,207,74,909719,10.85,10.97,10.89,10.91,10.82,1007235072
hslinkern_balanced:gol229:1,1007285,213,76,998583,12.16,12.05,12.04,12.04,12.02,1103704064
hslinkern_balanced:gol237:1,1116339,221,79,1107017,13.44,13.49,13.47,13.42,13.47,1222193152
hslinkern_balanced:gol244:1,1217987,228,81,1208105,14.94,15,14.98,15.06,15.03,1331245056
hslinkern_balanced:gol251:1,1325625,235,84,1315167,16.34,16.36,16.3,16.3,16.28,1448685568
hslinkern_balanced:gol257:1,1422784,240,86,1411819,17.75,17.74,17.83,17.65,17.81,1553543168
hslinkern_balanced:gol263:1,1524578,246,88,1513094,19.17,19.04,19.21,19.05,19.24,1663643648
hslinkern_balanced:gol269:1,1631115,252,90,1619100,20.55,20.64,20.72,20.45,20.68,1778987008
hslinkern_balanced:gol274:1,1723597,257,91,1711130,21.74,21.86,21.89,22.04,21.91,1878601728
hslinkern_balanced:gol279:1,1819510,262,93,1806583,23.11,23.2,23.21,23.14,23.15,1983459328
hslinkern_balanced:gol284:1,1918917,267,95,1905522,24.5,24.55,24.51,24.62,24.52,2090414080
hslinkern_balanced:gol289:1,2021880,272,96,2008008,26.15,26.04,26.11,26.19,26.04,2202611712
hslinkern_balanced:gol331:1,3035725,312,110,3017520,41.11,40.66,40.65,40.73,40.56,3300470784
hslinkern_balanced:gol365:1,4068883,345,122,4046740,56.37,56.4,56.2,56.27,56.11,4420349952
hslinkern_balanced:gol393:1,5077478,373,131,5051802,71.74,71.81,71.71,71.9,71.96,5512966144
hslinkern_balanced:gol417:1,6064344,396,139,6035432,87.34,87.33,87.27,86.83,87.25,6583562240
hslinkern_balanced:gol439:1,7074430,418,146,7042383,103.11,103.55,103.25,103.61,103.09,7678275584
hslinkern_balanced:gol459:1,8084845,437,153,8049808,120.56,120.24,120.73,119.72,119.98,8772988928
hslinkern_balanced:gol478:1,9129800,456,159,9091799,136.12,135.91,136.01,136.79,136.22,9906499584
hslinkern_balanced:gol495:1,10137868,472,165,10097113,154.21,154.27,153.41,153.35,153.07,10999115776
hslinkern_balanced:gol511:1,11152000,488,170,11108565,177.1,171.53,177.27,169.65,176.72,12099072000
hslinkern_balanced:gol526:1,12162172,503,175,12116147,186.9,190.63,187.16,191.04,187.27,13193785344
hslinkern_balanced:gol540:1,13158405,516,180,13109895,205.3,205.06,205.03,205.48,204.99,14274867200
hslinkern_balanced:gol554:1,14207607,530,185,14156547,223.16,226.4,223.38,225.03,222.48,15411523584
hslinkern_balanced:gol567:1,15230494,543,189,15177007,240.5,242.09,241.02,241.02,239.52,16521965568
hslinkern_balanced:gol579:1,16217235,554,193,16161458,256.97,258.12,257.45,257.75,257.35,17591513088
hslinkern_balanced:gol591:1,17245700,566,197,17187585,276.13,277.29,276.94,275.85,277.45,18707197952
hslinkern_balanced:gol602:1,18225851,577,201,18165551,293.66,295.5,293.94,294.82,293.32,19770454016
hslinkern_balanced:gol613:1,19242453,588,204,19179927,311.34,311.62,311.05,312.02,311.28,20872507392
hslinkern_balanced:gol624:1,20296172,599,208,20231380,330.55,331.72,331.73,330.45,331.39,22015455232
}; \addlegendentry{balanced tree};

      \addplot[color=black,mark=x, only marks] table [x=In, y=T1, col
      sep=comma] {FILE,In,UBo,LBo,Out,T1,T2,T3,T4,T5,Mem
hslinkern_hash:gol106:1,100647,95,35,98792,0.84,0.81,0.82,0.82,0.82,659800064
hslinkern_hash:gol134:1,202742,122,45,199772,1.77,1.74,1.75,1.74,1.74,1557377024
hslinkern_hash:gol153:1,301378,140,51,297502,2.82,2.8,2.8,2.8,2.8,2567528448
hslinkern_hash:gol169:1,405790,156,56,401058,3.99,4,3.99,3.99,3.99,3900268544
hslinkern_hash:gol182:1,506506,168,61,501016,5.17,5.24,5.19,5.19,5.17,5156462592
hslinkern_hash:gol193:1,603728,179,64,597552,6.3,6.29,6.25,6.24,6.3,6534451200
hslinkern_hash:gol204:1,712657,189,68,705755,7.5,7.48,7.49,7.49,7.54,8207978496
hslinkern_hash:gol213:1,810953,198,71,803427,8.53,8.59,8.53,8.51,8.58,9337135104
hslinkern_hash:gol222:1,917896,207,74,909719,10.31,10.29,10.28,10.28,10.28,11193004032
hslinkern_hash:gol229:1,1007285,213,76,998583,11.3,11.39,11.29,11.28,11.31,12311834624
hslinkern_hash:gol237:1,1116339,221,79,1107017,12.86,12.87,12.87,12.84,12.91,14602973184
hslinkern_hash:gol244:1,1217987,228,81,1208105,14.16,14.18,14.16,14.17,14.14,15886540800
hslinkern_hash:gol251:1,1325625,235,84,1315167,15.87,15.87,16,15.89,15.88,18498543616
hslinkern_hash:gol257:1,1422784,240,86,1411819,17.16,17.21,17.25,17.25,17.12,19853303808
hslinkern_hash:gol263:1,1524578,246,88,1513094,18.51,18.57,18.6,18.47,18.46,21273075712
hslinkern_hash:gol269:1,1631115,252,90,1619100,20,19.97,20,19.92,20.15,22758907904
hslinkern_hash:gol274:1,1723597,257,91,1711130,22.53,22.61,22.43,22.52,22.47,25563832320
hslinkern_hash:gol279:1,1819510,262,93,1806583,23.77,23.63,23.64,23.64,23.62,26962632704
hslinkern_hash:gol284:1,1918917,267,95,1905522,24.99,24.99,25,24.93,24.91,28462096384
hslinkern_hash:gol289:1,2021880,272,96,2008008,26.88,26.86,26.9,26.88,26.89,32006283264
hslinkern_hash:gol331:1,3035725,312,110,3017520,42.79,42.85,42.9,42.76,42.77,33464852480
}; \addlegendentry{hash table};

      \addplot[color=black,mark=Mercedes star, only marks] table
      [x=In, y=T1, col sep=comma] {FILE,In,UBo,LBo,Out,T1,T2,T3,T4,T5,Mem
hslinkern_malloc:gol106:1,100647,95,35,98792,1.59,1.6,1.6,1.58,1.6,997797888
hslinkern_malloc:gol134:1,202742,122,45,199772,3.51,3.48,3.59,3.47,3.55,2526621696
hslinkern_malloc:gol153:1,301378,140,51,297502,5.57,5.49,5.51,5.57,5.49,4127797248
hslinkern_malloc:gol169:1,405790,156,56,401058,7.85,7.79,7.82,7.76,7.78,6047739904
hslinkern_malloc:gol182:1,506506,168,61,501016,10.19,10.17,10.19,10.16,10.3,8142794752
hslinkern_malloc:gol193:1,603728,179,64,597552,12.33,12.35,12.39,12.43,12.33,10094194688
hslinkern_malloc:gol204:1,712657,189,68,705755,15.04,15.1,15.07,15.08,15.03,12554153984
hslinkern_malloc:gol213:1,810953,198,71,803427,17.51,17.44,17.52,17.4,17.43,14788669440
hslinkern_malloc:gol222:1,917896,207,74,909719,20.18,20.18,20.25,20.48,20.22,17540132864
hslinkern_malloc:gol229:1,1007285,213,76,998583,22.43,22.41,22.43,22.72,22.41,19593244672
hslinkern_malloc:gol237:1,1116339,221,79,1107017,25.43,25.25,25.26,25.25,25.21,22389796864
hslinkern_malloc:gol244:1,1217987,228,81,1208105,27.73,27.93,27.65,27.72,27.75,25151746048
hslinkern_malloc:gol251:1,1325625,235,84,1315167,30.89,31.01,31.16,31.13,31.18,28109778944
hslinkern_malloc:gol257:1,1422784,240,86,1411819,33.48,33.51,33.61,33.67,33.52,30630555648
hslinkern_malloc:gol263:1,1524578,246,88,1513094,36.45,36.13,36.05,36.24,36.04,
hslinkern_malloc:gol269:1,1631115,252,90,1619100,38.89,39.02,38.82,38.9,38.78,
hslinkern_malloc:gol274:1,1723597,257,91,1711130,41.42,41.24,41.44,41.76,41.29,
hslinkern_malloc:gol279:1,1819510,262,93,1806583,44.1,44.07,44.03,44.4,43.98,
hslinkern_malloc:gol284:1,1918917,267,95,1905522,47.59,46.91,46.73,46.75,46.82,
hslinkern_malloc:gol289:1,2021880,272,96,2008008,50.1,49.52,49.62,49.95,49.59,
};
      \addlegendentry{malloc trie};

      \addplot[color=black,mark=star, only marks] table [x=In, y=T1,
      col sep=comma] {FILE,In,UBo,LBo,Out,T1,T2,T3,T4,T5,Mem
hslinkern_calloc:gol106:1,100647,95,35,98792,0.33,0.34,0.32,0.34,0.34,355016704
hslinkern_calloc:gol134:1,202742,122,45,199772,0.73,0.74,0.74,0.75,0.73,858333184
hslinkern_calloc:gol153:1,301378,140,51,297502,1.15,1.16,1.16,1.16,1.15,1418272768
hslinkern_calloc:gol169:1,405790,156,56,401058,1.62,1.63,1.62,1.62,1.62,2086215680
hslinkern_calloc:gol182:1,506506,168,61,501016,2.09,2.11,2.09,2.09,2.1,2802393088
hslinkern_calloc:gol193:1,603728,179,64,597552,2.54,2.54,2.57,2.54,2.56,3501793280
hslinkern_calloc:gol204:1,712657,189,68,705755,3.05,3.09,3.1,3.07,3.08,4359528448
hslinkern_calloc:gol213:1,810953,198,71,803427,3.57,3.59,3.59,3.57,3.56,5153300480
hslinkern_calloc:gol222:1,917896,207,74,909719,4.13,4.17,4.12,4.12,4.13,6094921728
hslinkern_calloc:gol229:1,1007285,213,76,998583,4.63,4.59,4.59,4.59,4.59,6848847872
hslinkern_calloc:gol237:1,1116339,221,79,1107017,5.14,5.14,5.17,5.18,5.17,7838703616
hslinkern_calloc:gol244:1,1217987,228,81,1208105,5.68,5.73,5.69,5.68,5.67,8804442112
hslinkern_calloc:gol251:1,1325625,235,84,1315167,6.34,6.35,6.4,6.26,6.28,9857212416
hslinkern_calloc:gol257:1,1422784,240,86,1411819,7.07,6.83,6.85,6.81,6.85,10783105024
hslinkern_calloc:gol263:1,1524578,246,88,1513094,7.38,7.44,7.43,7.38,7.46,11846361088
hslinkern_calloc:gol269:1,1631115,252,90,1619100,8.03,8.11,8.02,7.99,7.99,12907520000
hslinkern_calloc:gol274:1,1723597,257,91,1711130,8.55,8.71,8.52,8.57,8.53,13942464512
hslinkern_calloc:gol279:1,1819510,262,93,1806583,9.07,9.05,9.2,9.04,9.02,14950146048
hslinkern_calloc:gol284:1,1918917,267,95,1905522,9.64,9.65,9.65,9.65,9.63,16011304960
hslinkern_calloc:gol289:1,2021880,272,96,2008008,10.25,10.21,10.17,10.19,10.2,17129086976
hslinkern_calloc:gol331:1,3035725,312,110,3017520,16.27,16.46,16.26,16.29,16.25,29353385984
hslinkern_calloc:gol365:1,4068883,345,122,4046740,23.21,22.97,22.89,22.9,22.89,33466949632
hslinkern_calloc:gol393:1,5077478,373,131,5051802,29.43,29.61,29.43,29.64,29.77,33466949632
}; \addlegendentry{calloc trie};
    \end{axis}
  \end{tikzpicture}\hfill{}
  \begin{tikzpicture}
    \begin{axis}[change y base, change x base, axis base prefix={axis
        x base -6 prefix {}}, use units, y SI prefix=giga, y unit=B, x
      unit=10^6, xlabel=input hyperedges, ylabel=memory usage,
      width=0.5\textwidth, legend cell align=left, legend pos=north
      west, xmax=2050000]

      \addplot[color=black,mark=+,only marks] table [x=In, y=Mem, col
      sep=comma] {FILE,In,UBo,LBo,Out,T1,T2,T3,T4,T5,Mem
hslinkern_balanced:gol106:1,100647,95,35,98792,0.88,0.87,0.88,0.88,0.9,123285504
hslinkern_balanced:gol134:1,202742,122,45,199772,1.98,1.98,1.99,1.99,2,233385984
hslinkern_balanced:gol153:1,301378,140,51,297502,3.12,3.15,3.13,3.12,3.13,340340736
hslinkern_balanced:gol169:1,405790,156,56,401058,4.4,4.38,4.41,4.38,4.39,452538368
hslinkern_balanced:gol182:1,506506,168,61,501016,5.61,5.61,5.63,5.59,5.61,561590272
hslinkern_balanced:gol193:1,603728,179,64,597552,6.87,6.87,6.87,6.83,6.84,666447872
hslinkern_balanced:gol204:1,712657,189,68,705755,8.17,8.24,8.24,8.23,8.21,784936960
hslinkern_balanced:gol213:1,810953,198,71,803427,9.53,9.46,9.51,9.59,9.61,890843136
hslinkern_balanced:gol222:1,917896,207,74,909719,10.85,10.97,10.89,10.91,10.82,1007235072
hslinkern_balanced:gol229:1,1007285,213,76,998583,12.16,12.05,12.04,12.04,12.02,1103704064
hslinkern_balanced:gol237:1,1116339,221,79,1107017,13.44,13.49,13.47,13.42,13.47,1222193152
hslinkern_balanced:gol244:1,1217987,228,81,1208105,14.94,15,14.98,15.06,15.03,1331245056
hslinkern_balanced:gol251:1,1325625,235,84,1315167,16.34,16.36,16.3,16.3,16.28,1448685568
hslinkern_balanced:gol257:1,1422784,240,86,1411819,17.75,17.74,17.83,17.65,17.81,1553543168
hslinkern_balanced:gol263:1,1524578,246,88,1513094,19.17,19.04,19.21,19.05,19.24,1663643648
hslinkern_balanced:gol269:1,1631115,252,90,1619100,20.55,20.64,20.72,20.45,20.68,1778987008
hslinkern_balanced:gol274:1,1723597,257,91,1711130,21.74,21.86,21.89,22.04,21.91,1878601728
hslinkern_balanced:gol279:1,1819510,262,93,1806583,23.11,23.2,23.21,23.14,23.15,1983459328
hslinkern_balanced:gol284:1,1918917,267,95,1905522,24.5,24.55,24.51,24.62,24.52,2090414080
hslinkern_balanced:gol289:1,2021880,272,96,2008008,26.15,26.04,26.11,26.19,26.04,2202611712
hslinkern_balanced:gol331:1,3035725,312,110,3017520,41.11,40.66,40.65,40.73,40.56,3300470784
hslinkern_balanced:gol365:1,4068883,345,122,4046740,56.37,56.4,56.2,56.27,56.11,4420349952
hslinkern_balanced:gol393:1,5077478,373,131,5051802,71.74,71.81,71.71,71.9,71.96,5512966144
hslinkern_balanced:gol417:1,6064344,396,139,6035432,87.34,87.33,87.27,86.83,87.25,6583562240
hslinkern_balanced:gol439:1,7074430,418,146,7042383,103.11,103.55,103.25,103.61,103.09,7678275584
hslinkern_balanced:gol459:1,8084845,437,153,8049808,120.56,120.24,120.73,119.72,119.98,8772988928
hslinkern_balanced:gol478:1,9129800,456,159,9091799,136.12,135.91,136.01,136.79,136.22,9906499584
hslinkern_balanced:gol495:1,10137868,472,165,10097113,154.21,154.27,153.41,153.35,153.07,10999115776
hslinkern_balanced:gol511:1,11152000,488,170,11108565,177.1,171.53,177.27,169.65,176.72,12099072000
hslinkern_balanced:gol526:1,12162172,503,175,12116147,186.9,190.63,187.16,191.04,187.27,13193785344
hslinkern_balanced:gol540:1,13158405,516,180,13109895,205.3,205.06,205.03,205.48,204.99,14274867200
hslinkern_balanced:gol554:1,14207607,530,185,14156547,223.16,226.4,223.38,225.03,222.48,15411523584
hslinkern_balanced:gol567:1,15230494,543,189,15177007,240.5,242.09,241.02,241.02,239.52,16521965568
hslinkern_balanced:gol579:1,16217235,554,193,16161458,256.97,258.12,257.45,257.75,257.35,17591513088
hslinkern_balanced:gol591:1,17245700,566,197,17187585,276.13,277.29,276.94,275.85,277.45,18707197952
hslinkern_balanced:gol602:1,18225851,577,201,18165551,293.66,295.5,293.94,294.82,293.32,19770454016
hslinkern_balanced:gol613:1,19242453,588,204,19179927,311.34,311.62,311.05,312.02,311.28,20872507392
hslinkern_balanced:gol624:1,20296172,599,208,20231380,330.55,331.72,331.73,330.45,331.39,22015455232
};

      \addplot[color=black,mark=x, only marks] table [x=In, y=Mem, col
      sep=comma] {FILE,In,UBo,LBo,Out,T1,T2,T3,T4,T5,Mem
hslinkern_hash:gol106:1,100647,95,35,98792,0.84,0.81,0.82,0.82,0.82,659800064
hslinkern_hash:gol134:1,202742,122,45,199772,1.77,1.74,1.75,1.74,1.74,1557377024
hslinkern_hash:gol153:1,301378,140,51,297502,2.82,2.8,2.8,2.8,2.8,2567528448
hslinkern_hash:gol169:1,405790,156,56,401058,3.99,4,3.99,3.99,3.99,3900268544
hslinkern_hash:gol182:1,506506,168,61,501016,5.17,5.24,5.19,5.19,5.17,5156462592
hslinkern_hash:gol193:1,603728,179,64,597552,6.3,6.29,6.25,6.24,6.3,6534451200
hslinkern_hash:gol204:1,712657,189,68,705755,7.5,7.48,7.49,7.49,7.54,8207978496
hslinkern_hash:gol213:1,810953,198,71,803427,8.53,8.59,8.53,8.51,8.58,9337135104
hslinkern_hash:gol222:1,917896,207,74,909719,10.31,10.29,10.28,10.28,10.28,11193004032
hslinkern_hash:gol229:1,1007285,213,76,998583,11.3,11.39,11.29,11.28,11.31,12311834624
hslinkern_hash:gol237:1,1116339,221,79,1107017,12.86,12.87,12.87,12.84,12.91,14602973184
hslinkern_hash:gol244:1,1217987,228,81,1208105,14.16,14.18,14.16,14.17,14.14,15886540800
hslinkern_hash:gol251:1,1325625,235,84,1315167,15.87,15.87,16,15.89,15.88,18498543616
hslinkern_hash:gol257:1,1422784,240,86,1411819,17.16,17.21,17.25,17.25,17.12,19853303808
hslinkern_hash:gol263:1,1524578,246,88,1513094,18.51,18.57,18.6,18.47,18.46,21273075712
hslinkern_hash:gol269:1,1631115,252,90,1619100,20,19.97,20,19.92,20.15,22758907904
hslinkern_hash:gol274:1,1723597,257,91,1711130,22.53,22.61,22.43,22.52,22.47,25563832320
hslinkern_hash:gol279:1,1819510,262,93,1806583,23.77,23.63,23.64,23.64,23.62,26962632704
hslinkern_hash:gol284:1,1918917,267,95,1905522,24.99,24.99,25,24.93,24.91,28462096384
hslinkern_hash:gol289:1,2021880,272,96,2008008,26.88,26.86,26.9,26.88,26.89,32006283264
hslinkern_hash:gol331:1,3035725,312,110,3017520,42.79,42.85,42.9,42.76,42.77,33464852480
};

      \addplot[color=black,mark=Mercedes star, only marks] table
      [x=In, y=Mem, col sep=comma] {FILE,In,UBo,LBo,Out,T1,T2,T3,T4,T5,Mem
hslinkern_malloc:gol106:1,100647,95,35,98792,1.59,1.6,1.6,1.58,1.6,997797888
hslinkern_malloc:gol134:1,202742,122,45,199772,3.51,3.48,3.59,3.47,3.55,2526621696
hslinkern_malloc:gol153:1,301378,140,51,297502,5.57,5.49,5.51,5.57,5.49,4127797248
hslinkern_malloc:gol169:1,405790,156,56,401058,7.85,7.79,7.82,7.76,7.78,6047739904
hslinkern_malloc:gol182:1,506506,168,61,501016,10.19,10.17,10.19,10.16,10.3,8142794752
hslinkern_malloc:gol193:1,603728,179,64,597552,12.33,12.35,12.39,12.43,12.33,10094194688
hslinkern_malloc:gol204:1,712657,189,68,705755,15.04,15.1,15.07,15.08,15.03,12554153984
hslinkern_malloc:gol213:1,810953,198,71,803427,17.51,17.44,17.52,17.4,17.43,14788669440
hslinkern_malloc:gol222:1,917896,207,74,909719,20.18,20.18,20.25,20.48,20.22,17540132864
hslinkern_malloc:gol229:1,1007285,213,76,998583,22.43,22.41,22.43,22.72,22.41,19593244672
hslinkern_malloc:gol237:1,1116339,221,79,1107017,25.43,25.25,25.26,25.25,25.21,22389796864
hslinkern_malloc:gol244:1,1217987,228,81,1208105,27.73,27.93,27.65,27.72,27.75,25151746048
hslinkern_malloc:gol251:1,1325625,235,84,1315167,30.89,31.01,31.16,31.13,31.18,28109778944
hslinkern_malloc:gol257:1,1422784,240,86,1411819,33.48,33.51,33.61,33.67,33.52,30630555648
hslinkern_malloc:gol263:1,1524578,246,88,1513094,36.45,36.13,36.05,36.24,36.04,
hslinkern_malloc:gol269:1,1631115,252,90,1619100,38.89,39.02,38.82,38.9,38.78,
hslinkern_malloc:gol274:1,1723597,257,91,1711130,41.42,41.24,41.44,41.76,41.29,
hslinkern_malloc:gol279:1,1819510,262,93,1806583,44.1,44.07,44.03,44.4,43.98,
hslinkern_malloc:gol284:1,1918917,267,95,1905522,47.59,46.91,46.73,46.75,46.82,
hslinkern_malloc:gol289:1,2021880,272,96,2008008,50.1,49.52,49.62,49.95,49.59,
};

      \addplot[color=black,mark=star, only marks] table [x=In, y=Mem,
      col sep=comma] {FILE,In,UBo,LBo,Out,T1,T2,T3,T4,T5,Mem
hslinkern_calloc:gol106:1,100647,95,35,98792,0.33,0.34,0.32,0.34,0.34,355016704
hslinkern_calloc:gol134:1,202742,122,45,199772,0.73,0.74,0.74,0.75,0.73,858333184
hslinkern_calloc:gol153:1,301378,140,51,297502,1.15,1.16,1.16,1.16,1.15,1418272768
hslinkern_calloc:gol169:1,405790,156,56,401058,1.62,1.63,1.62,1.62,1.62,2086215680
hslinkern_calloc:gol182:1,506506,168,61,501016,2.09,2.11,2.09,2.09,2.1,2802393088
hslinkern_calloc:gol193:1,603728,179,64,597552,2.54,2.54,2.57,2.54,2.56,3501793280
hslinkern_calloc:gol204:1,712657,189,68,705755,3.05,3.09,3.1,3.07,3.08,4359528448
hslinkern_calloc:gol213:1,810953,198,71,803427,3.57,3.59,3.59,3.57,3.56,5153300480
hslinkern_calloc:gol222:1,917896,207,74,909719,4.13,4.17,4.12,4.12,4.13,6094921728
hslinkern_calloc:gol229:1,1007285,213,76,998583,4.63,4.59,4.59,4.59,4.59,6848847872
hslinkern_calloc:gol237:1,1116339,221,79,1107017,5.14,5.14,5.17,5.18,5.17,7838703616
hslinkern_calloc:gol244:1,1217987,228,81,1208105,5.68,5.73,5.69,5.68,5.67,8804442112
hslinkern_calloc:gol251:1,1325625,235,84,1315167,6.34,6.35,6.4,6.26,6.28,9857212416
hslinkern_calloc:gol257:1,1422784,240,86,1411819,7.07,6.83,6.85,6.81,6.85,10783105024
hslinkern_calloc:gol263:1,1524578,246,88,1513094,7.38,7.44,7.43,7.38,7.46,11846361088
hslinkern_calloc:gol269:1,1631115,252,90,1619100,8.03,8.11,8.02,7.99,7.99,12907520000
hslinkern_calloc:gol274:1,1723597,257,91,1711130,8.55,8.71,8.52,8.57,8.53,13942464512
hslinkern_calloc:gol279:1,1819510,262,93,1806583,9.07,9.05,9.2,9.04,9.02,14950146048
hslinkern_calloc:gol284:1,1918917,267,95,1905522,9.64,9.65,9.65,9.65,9.63,16011304960
hslinkern_calloc:gol289:1,2021880,272,96,2008008,10.25,10.21,10.17,10.19,10.2,17129086976
hslinkern_calloc:gol331:1,3035725,312,110,3017520,16.27,16.46,16.26,16.29,16.25,29353385984
hslinkern_calloc:gol365:1,4068883,345,122,4046740,23.21,22.97,22.89,22.9,22.89,33466949632
hslinkern_calloc:gol393:1,5077478,373,131,5051802,29.43,29.61,29.43,29.64,29.77,33466949632
};
    \end{axis}
  \end{tikzpicture}
  \caption{Performance of \autoref{alg:eff-sunflo} on conflict
    hypergraphs of the \textsc{Golomb Subruler} problem with at most
    $2\cdot 10^6$~hyperedges.}
  \label{fig:golomb-time-small}
\end{figure}

\autoref{fig:golomb-time-small} shows the performance of our
kernelization algorithm on conflict hypergraphs of the \textsc{Golomb
  Subruler} problem of size up to $2\cdot 10^6$~hyperedges.  On larger
instances, the implementations based on malloc tries, calloc tries,
and hash tables hit the 32\,GB memory limit of the \texttt{valgrind}
memory measuring tool.  One can observe that the implementation using
the malloc trie is the slowest. This is due to the complicated
initialization procedure required by \autoref{lem:stringtrie}.  The
fastest implementation is the variant using the calloc trie, which is
the same as the malloc trie implementation except that we skip the
intricate initialization of the trie using
\autoref{lem:stringtrie}. Unsurprisingly, the memory usage of the
balanced tree implementation is the lowest, as it grows linearly with
the number of stored elements. Surprisingly, the hash table
implementation of the GNU C++ compiler consumes even more memory than
our calloc trie.

Since the malloc trie, calloc trie, and hash table reach the 32\,GB
memory limit of the \texttt{valgrind} memory measurement tool
between~$2\cdot 10^6$ and $5\cdot 10^6$~input hyperedges, we made
ongoing experiments only with the balanced tree implementation. Thus,
unfortunately, we were unable to see how the running time of our
fastest implementation---using calloc tries---scales to larger
instances.
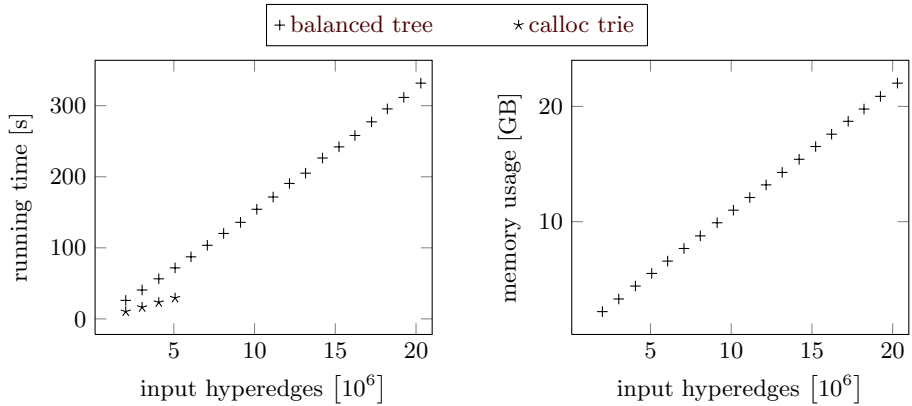
\begin{figure}\small\centering
  \ref{legendsba}

  \begin{tikzpicture}
    \begin{axis}[axis base prefix={axis x base -6 prefix {}}, y
      unit=s, x unit=10^6, xlabel=input hyperedges, ylabel=running
      time, width=0.5\textwidth, change x base, axis base prefix={axis
        x base -6 prefix {}}, xmax=21000000, legend to name=legendsba,
      legend columns=-1, legend style={/tikz/every even column/.append
        style={column sep=1cm}}]

      \addplot[color=black,mark=+,only marks] table [x=In, y=T2, col
      sep=comma,skip coords between index={0}{19}]
      {FILE,In,UBo,LBo,Out,T1,T2,T3,T4,T5,Mem
hslinkern_balanced:gol106:1,100647,95,35,98792,0.88,0.87,0.88,0.88,0.9,123285504
hslinkern_balanced:gol134:1,202742,122,45,199772,1.98,1.98,1.99,1.99,2,233385984
hslinkern_balanced:gol153:1,301378,140,51,297502,3.12,3.15,3.13,3.12,3.13,340340736
hslinkern_balanced:gol169:1,405790,156,56,401058,4.4,4.38,4.41,4.38,4.39,452538368
hslinkern_balanced:gol182:1,506506,168,61,501016,5.61,5.61,5.63,5.59,5.61,561590272
hslinkern_balanced:gol193:1,603728,179,64,597552,6.87,6.87,6.87,6.83,6.84,666447872
hslinkern_balanced:gol204:1,712657,189,68,705755,8.17,8.24,8.24,8.23,8.21,784936960
hslinkern_balanced:gol213:1,810953,198,71,803427,9.53,9.46,9.51,9.59,9.61,890843136
hslinkern_balanced:gol222:1,917896,207,74,909719,10.85,10.97,10.89,10.91,10.82,1007235072
hslinkern_balanced:gol229:1,1007285,213,76,998583,12.16,12.05,12.04,12.04,12.02,1103704064
hslinkern_balanced:gol237:1,1116339,221,79,1107017,13.44,13.49,13.47,13.42,13.47,1222193152
hslinkern_balanced:gol244:1,1217987,228,81,1208105,14.94,15,14.98,15.06,15.03,1331245056
hslinkern_balanced:gol251:1,1325625,235,84,1315167,16.34,16.36,16.3,16.3,16.28,1448685568
hslinkern_balanced:gol257:1,1422784,240,86,1411819,17.75,17.74,17.83,17.65,17.81,1553543168
hslinkern_balanced:gol263:1,1524578,246,88,1513094,19.17,19.04,19.21,19.05,19.24,1663643648
hslinkern_balanced:gol269:1,1631115,252,90,1619100,20.55,20.64,20.72,20.45,20.68,1778987008
hslinkern_balanced:gol274:1,1723597,257,91,1711130,21.74,21.86,21.89,22.04,21.91,1878601728
hslinkern_balanced:gol279:1,1819510,262,93,1806583,23.11,23.2,23.21,23.14,23.15,1983459328
hslinkern_balanced:gol284:1,1918917,267,95,1905522,24.5,24.55,24.51,24.62,24.52,2090414080
hslinkern_balanced:gol289:1,2021880,272,96,2008008,26.15,26.04,26.11,26.19,26.04,2202611712
hslinkern_balanced:gol331:1,3035725,312,110,3017520,41.11,40.66,40.65,40.73,40.56,3300470784
hslinkern_balanced:gol365:1,4068883,345,122,4046740,56.37,56.4,56.2,56.27,56.11,4420349952
hslinkern_balanced:gol393:1,5077478,373,131,5051802,71.74,71.81,71.71,71.9,71.96,5512966144
hslinkern_balanced:gol417:1,6064344,396,139,6035432,87.34,87.33,87.27,86.83,87.25,6583562240
hslinkern_balanced:gol439:1,7074430,418,146,7042383,103.11,103.55,103.25,103.61,103.09,7678275584
hslinkern_balanced:gol459:1,8084845,437,153,8049808,120.56,120.24,120.73,119.72,119.98,8772988928
hslinkern_balanced:gol478:1,9129800,456,159,9091799,136.12,135.91,136.01,136.79,136.22,9906499584
hslinkern_balanced:gol495:1,10137868,472,165,10097113,154.21,154.27,153.41,153.35,153.07,10999115776
hslinkern_balanced:gol511:1,11152000,488,170,11108565,177.1,171.53,177.27,169.65,176.72,12099072000
hslinkern_balanced:gol526:1,12162172,503,175,12116147,186.9,190.63,187.16,191.04,187.27,13193785344
hslinkern_balanced:gol540:1,13158405,516,180,13109895,205.3,205.06,205.03,205.48,204.99,14274867200
hslinkern_balanced:gol554:1,14207607,530,185,14156547,223.16,226.4,223.38,225.03,222.48,15411523584
hslinkern_balanced:gol567:1,15230494,543,189,15177007,240.5,242.09,241.02,241.02,239.52,16521965568
hslinkern_balanced:gol579:1,16217235,554,193,16161458,256.97,258.12,257.45,257.75,257.35,17591513088
hslinkern_balanced:gol591:1,17245700,566,197,17187585,276.13,277.29,276.94,275.85,277.45,18707197952
hslinkern_balanced:gol602:1,18225851,577,201,18165551,293.66,295.5,293.94,294.82,293.32,19770454016
hslinkern_balanced:gol613:1,19242453,588,204,19179927,311.34,311.62,311.05,312.02,311.28,20872507392
hslinkern_balanced:gol624:1,20296172,599,208,20231380,330.55,331.72,331.73,330.45,331.39,22015455232
}; \addlegendentry{balanced tree};

      \addplot[color=black,mark=star, only marks] table [x=In, y=T1,
      col sep=comma,skip coords between index={0}{19}]
      {FILE,In,UBo,LBo,Out,T1,T2,T3,T4,T5,Mem
hslinkern_calloc:gol106:1,100647,95,35,98792,0.33,0.34,0.32,0.34,0.34,355016704
hslinkern_calloc:gol134:1,202742,122,45,199772,0.73,0.74,0.74,0.75,0.73,858333184
hslinkern_calloc:gol153:1,301378,140,51,297502,1.15,1.16,1.16,1.16,1.15,1418272768
hslinkern_calloc:gol169:1,405790,156,56,401058,1.62,1.63,1.62,1.62,1.62,2086215680
hslinkern_calloc:gol182:1,506506,168,61,501016,2.09,2.11,2.09,2.09,2.1,2802393088
hslinkern_calloc:gol193:1,603728,179,64,597552,2.54,2.54,2.57,2.54,2.56,3501793280
hslinkern_calloc:gol204:1,712657,189,68,705755,3.05,3.09,3.1,3.07,3.08,4359528448
hslinkern_calloc:gol213:1,810953,198,71,803427,3.57,3.59,3.59,3.57,3.56,5153300480
hslinkern_calloc:gol222:1,917896,207,74,909719,4.13,4.17,4.12,4.12,4.13,6094921728
hslinkern_calloc:gol229:1,1007285,213,76,998583,4.63,4.59,4.59,4.59,4.59,6848847872
hslinkern_calloc:gol237:1,1116339,221,79,1107017,5.14,5.14,5.17,5.18,5.17,7838703616
hslinkern_calloc:gol244:1,1217987,228,81,1208105,5.68,5.73,5.69,5.68,5.67,8804442112
hslinkern_calloc:gol251:1,1325625,235,84,1315167,6.34,6.35,6.4,6.26,6.28,9857212416
hslinkern_calloc:gol257:1,1422784,240,86,1411819,7.07,6.83,6.85,6.81,6.85,10783105024
hslinkern_calloc:gol263:1,1524578,246,88,1513094,7.38,7.44,7.43,7.38,7.46,11846361088
hslinkern_calloc:gol269:1,1631115,252,90,1619100,8.03,8.11,8.02,7.99,7.99,12907520000
hslinkern_calloc:gol274:1,1723597,257,91,1711130,8.55,8.71,8.52,8.57,8.53,13942464512
hslinkern_calloc:gol279:1,1819510,262,93,1806583,9.07,9.05,9.2,9.04,9.02,14950146048
hslinkern_calloc:gol284:1,1918917,267,95,1905522,9.64,9.65,9.65,9.65,9.63,16011304960
hslinkern_calloc:gol289:1,2021880,272,96,2008008,10.25,10.21,10.17,10.19,10.2,17129086976
hslinkern_calloc:gol331:1,3035725,312,110,3017520,16.27,16.46,16.26,16.29,16.25,29353385984
hslinkern_calloc:gol365:1,4068883,345,122,4046740,23.21,22.97,22.89,22.9,22.89,33466949632
hslinkern_calloc:gol393:1,5077478,373,131,5051802,29.43,29.61,29.43,29.64,29.77,33466949632
}; \addlegendentry{calloc trie};
        
    \end{axis}
  \end{tikzpicture}\hfill{}
  \begin{tikzpicture}
    \begin{axis}[change y base, change x base, axis base prefix={axis
        x base -6 prefix {}}, use units, y SI prefix=giga, y unit=B, x
      unit=10^6, xlabel=input hyperedges, ylabel=memory usage,
      width=0.5\textwidth, xmax=21000000]

      \addplot[color=black,mark=+, only marks] table [x=In, y=Mem, col
      sep=comma,skip coords between index={0}{19}]
      {FILE,In,UBo,LBo,Out,T1,T2,T3,T4,T5,Mem
hslinkern_balanced:gol106:1,100647,95,35,98792,0.88,0.87,0.88,0.88,0.9,123285504
hslinkern_balanced:gol134:1,202742,122,45,199772,1.98,1.98,1.99,1.99,2,233385984
hslinkern_balanced:gol153:1,301378,140,51,297502,3.12,3.15,3.13,3.12,3.13,340340736
hslinkern_balanced:gol169:1,405790,156,56,401058,4.4,4.38,4.41,4.38,4.39,452538368
hslinkern_balanced:gol182:1,506506,168,61,501016,5.61,5.61,5.63,5.59,5.61,561590272
hslinkern_balanced:gol193:1,603728,179,64,597552,6.87,6.87,6.87,6.83,6.84,666447872
hslinkern_balanced:gol204:1,712657,189,68,705755,8.17,8.24,8.24,8.23,8.21,784936960
hslinkern_balanced:gol213:1,810953,198,71,803427,9.53,9.46,9.51,9.59,9.61,890843136
hslinkern_balanced:gol222:1,917896,207,74,909719,10.85,10.97,10.89,10.91,10.82,1007235072
hslinkern_balanced:gol229:1,1007285,213,76,998583,12.16,12.05,12.04,12.04,12.02,1103704064
hslinkern_balanced:gol237:1,1116339,221,79,1107017,13.44,13.49,13.47,13.42,13.47,1222193152
hslinkern_balanced:gol244:1,1217987,228,81,1208105,14.94,15,14.98,15.06,15.03,1331245056
hslinkern_balanced:gol251:1,1325625,235,84,1315167,16.34,16.36,16.3,16.3,16.28,1448685568
hslinkern_balanced:gol257:1,1422784,240,86,1411819,17.75,17.74,17.83,17.65,17.81,1553543168
hslinkern_balanced:gol263:1,1524578,246,88,1513094,19.17,19.04,19.21,19.05,19.24,1663643648
hslinkern_balanced:gol269:1,1631115,252,90,1619100,20.55,20.64,20.72,20.45,20.68,1778987008
hslinkern_balanced:gol274:1,1723597,257,91,1711130,21.74,21.86,21.89,22.04,21.91,1878601728
hslinkern_balanced:gol279:1,1819510,262,93,1806583,23.11,23.2,23.21,23.14,23.15,1983459328
hslinkern_balanced:gol284:1,1918917,267,95,1905522,24.5,24.55,24.51,24.62,24.52,2090414080
hslinkern_balanced:gol289:1,2021880,272,96,2008008,26.15,26.04,26.11,26.19,26.04,2202611712
hslinkern_balanced:gol331:1,3035725,312,110,3017520,41.11,40.66,40.65,40.73,40.56,3300470784
hslinkern_balanced:gol365:1,4068883,345,122,4046740,56.37,56.4,56.2,56.27,56.11,4420349952
hslinkern_balanced:gol393:1,5077478,373,131,5051802,71.74,71.81,71.71,71.9,71.96,5512966144
hslinkern_balanced:gol417:1,6064344,396,139,6035432,87.34,87.33,87.27,86.83,87.25,6583562240
hslinkern_balanced:gol439:1,7074430,418,146,7042383,103.11,103.55,103.25,103.61,103.09,7678275584
hslinkern_balanced:gol459:1,8084845,437,153,8049808,120.56,120.24,120.73,119.72,119.98,8772988928
hslinkern_balanced:gol478:1,9129800,456,159,9091799,136.12,135.91,136.01,136.79,136.22,9906499584
hslinkern_balanced:gol495:1,10137868,472,165,10097113,154.21,154.27,153.41,153.35,153.07,10999115776
hslinkern_balanced:gol511:1,11152000,488,170,11108565,177.1,171.53,177.27,169.65,176.72,12099072000
hslinkern_balanced:gol526:1,12162172,503,175,12116147,186.9,190.63,187.16,191.04,187.27,13193785344
hslinkern_balanced:gol540:1,13158405,516,180,13109895,205.3,205.06,205.03,205.48,204.99,14274867200
hslinkern_balanced:gol554:1,14207607,530,185,14156547,223.16,226.4,223.38,225.03,222.48,15411523584
hslinkern_balanced:gol567:1,15230494,543,189,15177007,240.5,242.09,241.02,241.02,239.52,16521965568
hslinkern_balanced:gol579:1,16217235,554,193,16161458,256.97,258.12,257.45,257.75,257.35,17591513088
hslinkern_balanced:gol591:1,17245700,566,197,17187585,276.13,277.29,276.94,275.85,277.45,18707197952
hslinkern_balanced:gol602:1,18225851,577,201,18165551,293.66,295.5,293.94,294.82,293.32,19770454016
hslinkern_balanced:gol613:1,19242453,588,204,19179927,311.34,311.62,311.05,312.02,311.28,20872507392
hslinkern_balanced:gol624:1,20296172,599,208,20231380,330.55,331.72,331.73,330.45,331.39,22015455232
};

    \end{axis}
  \end{tikzpicture}

  \caption{Performance of \autoref{alg:eff-sunflo} on conflict
    hypergraphs of the \textsc{Golomb Subruler} problem with at most
    $20\cdot 10^6$~hyperedges.}
  \label{fig:golomb-time}
\end{figure}
\autoref{fig:golomb-time} shows that the implementation using the
balanced tree solves \hs 4 instances on conflict hypergraphs of
\textsc{Golomb Subruler} with $20\cdot 10^6$~hyperedges in less than
five minutes and does not even hit the 32GB memory limit of the
\texttt{valgrind} memory measurement tool.

\begin{figure}\small
  \begin{tikzpicture}
    \begin{axis}[change y base, change x base, axis base prefix={axis
        x base -6 prefix {}}, axis base prefix={axis y base -3 prefix
        {}}, x unit=10^6, y unit=10^3, xlabel=input hyperedges,
      ylabel=removed hyperedges, width=0.5\textwidth, legend cell
      align=left, legend pos=north west, xmax=21000000]

      \addplot[color=black,mark=+,only marks] table [x=In, y
      expr=(\thisrow{In}-\thisrow{Out}), col sep=comma, skip coords
      between index={0}{19} ] {FILE,In,UBo,LBo,Out,T1,T2,T3,T4,T5,Mem
hslinkern_balanced:gol106:1,100647,95,35,98792,0.88,0.87,0.88,0.88,0.9,123285504
hslinkern_balanced:gol134:1,202742,122,45,199772,1.98,1.98,1.99,1.99,2,233385984
hslinkern_balanced:gol153:1,301378,140,51,297502,3.12,3.15,3.13,3.12,3.13,340340736
hslinkern_balanced:gol169:1,405790,156,56,401058,4.4,4.38,4.41,4.38,4.39,452538368
hslinkern_balanced:gol182:1,506506,168,61,501016,5.61,5.61,5.63,5.59,5.61,561590272
hslinkern_balanced:gol193:1,603728,179,64,597552,6.87,6.87,6.87,6.83,6.84,666447872
hslinkern_balanced:gol204:1,712657,189,68,705755,8.17,8.24,8.24,8.23,8.21,784936960
hslinkern_balanced:gol213:1,810953,198,71,803427,9.53,9.46,9.51,9.59,9.61,890843136
hslinkern_balanced:gol222:1,917896,207,74,909719,10.85,10.97,10.89,10.91,10.82,1007235072
hslinkern_balanced:gol229:1,1007285,213,76,998583,12.16,12.05,12.04,12.04,12.02,1103704064
hslinkern_balanced:gol237:1,1116339,221,79,1107017,13.44,13.49,13.47,13.42,13.47,1222193152
hslinkern_balanced:gol244:1,1217987,228,81,1208105,14.94,15,14.98,15.06,15.03,1331245056
hslinkern_balanced:gol251:1,1325625,235,84,1315167,16.34,16.36,16.3,16.3,16.28,1448685568
hslinkern_balanced:gol257:1,1422784,240,86,1411819,17.75,17.74,17.83,17.65,17.81,1553543168
hslinkern_balanced:gol263:1,1524578,246,88,1513094,19.17,19.04,19.21,19.05,19.24,1663643648
hslinkern_balanced:gol269:1,1631115,252,90,1619100,20.55,20.64,20.72,20.45,20.68,1778987008
hslinkern_balanced:gol274:1,1723597,257,91,1711130,21.74,21.86,21.89,22.04,21.91,1878601728
hslinkern_balanced:gol279:1,1819510,262,93,1806583,23.11,23.2,23.21,23.14,23.15,1983459328
hslinkern_balanced:gol284:1,1918917,267,95,1905522,24.5,24.55,24.51,24.62,24.52,2090414080
hslinkern_balanced:gol289:1,2021880,272,96,2008008,26.15,26.04,26.11,26.19,26.04,2202611712
hslinkern_balanced:gol331:1,3035725,312,110,3017520,41.11,40.66,40.65,40.73,40.56,3300470784
hslinkern_balanced:gol365:1,4068883,345,122,4046740,56.37,56.4,56.2,56.27,56.11,4420349952
hslinkern_balanced:gol393:1,5077478,373,131,5051802,71.74,71.81,71.71,71.9,71.96,5512966144
hslinkern_balanced:gol417:1,6064344,396,139,6035432,87.34,87.33,87.27,86.83,87.25,6583562240
hslinkern_balanced:gol439:1,7074430,418,146,7042383,103.11,103.55,103.25,103.61,103.09,7678275584
hslinkern_balanced:gol459:1,8084845,437,153,8049808,120.56,120.24,120.73,119.72,119.98,8772988928
hslinkern_balanced:gol478:1,9129800,456,159,9091799,136.12,135.91,136.01,136.79,136.22,9906499584
hslinkern_balanced:gol495:1,10137868,472,165,10097113,154.21,154.27,153.41,153.35,153.07,10999115776
hslinkern_balanced:gol511:1,11152000,488,170,11108565,177.1,171.53,177.27,169.65,176.72,12099072000
hslinkern_balanced:gol526:1,12162172,503,175,12116147,186.9,190.63,187.16,191.04,187.27,13193785344
hslinkern_balanced:gol540:1,13158405,516,180,13109895,205.3,205.06,205.03,205.48,204.99,14274867200
hslinkern_balanced:gol554:1,14207607,530,185,14156547,223.16,226.4,223.38,225.03,222.48,15411523584
hslinkern_balanced:gol567:1,15230494,543,189,15177007,240.5,242.09,241.02,241.02,239.52,16521965568
hslinkern_balanced:gol579:1,16217235,554,193,16161458,256.97,258.12,257.45,257.75,257.35,17591513088
hslinkern_balanced:gol591:1,17245700,566,197,17187585,276.13,277.29,276.94,275.85,277.45,18707197952
hslinkern_balanced:gol602:1,18225851,577,201,18165551,293.66,295.5,293.94,294.82,293.32,19770454016
hslinkern_balanced:gol613:1,19242453,588,204,19179927,311.34,311.62,311.05,312.02,311.28,20872507392
hslinkern_balanced:gol624:1,20296172,599,208,20231380,330.55,331.72,331.73,330.45,331.39,22015455232
};
    \end{axis}
  \end{tikzpicture}\hfill{}
  \begin{tikzpicture}
    \begin{axis}[xlabel={$k:=n-\sqrt n$}, ylabel={output hyperedges},
      y unit=10^6, change y base, axis base prefix={axis y base -6
        prefix {}}, legend cell align=left, legend pos=north west,
      width=0.5\textwidth, xmin=250,xmax=620]

      \addplot[color=black,mark=+,only marks] table [x=UBo, y=Out, col
      sep=comma, skip coords between index={0}{19}]
      {FILE,In,UBo,LBo,Out,T1,T2,T3,T4,T5,Mem
hslinkern_balanced:gol106:1,100647,95,35,98792,0.88,0.87,0.88,0.88,0.9,123285504
hslinkern_balanced:gol134:1,202742,122,45,199772,1.98,1.98,1.99,1.99,2,233385984
hslinkern_balanced:gol153:1,301378,140,51,297502,3.12,3.15,3.13,3.12,3.13,340340736
hslinkern_balanced:gol169:1,405790,156,56,401058,4.4,4.38,4.41,4.38,4.39,452538368
hslinkern_balanced:gol182:1,506506,168,61,501016,5.61,5.61,5.63,5.59,5.61,561590272
hslinkern_balanced:gol193:1,603728,179,64,597552,6.87,6.87,6.87,6.83,6.84,666447872
hslinkern_balanced:gol204:1,712657,189,68,705755,8.17,8.24,8.24,8.23,8.21,784936960
hslinkern_balanced:gol213:1,810953,198,71,803427,9.53,9.46,9.51,9.59,9.61,890843136
hslinkern_balanced:gol222:1,917896,207,74,909719,10.85,10.97,10.89,10.91,10.82,1007235072
hslinkern_balanced:gol229:1,1007285,213,76,998583,12.16,12.05,12.04,12.04,12.02,1103704064
hslinkern_balanced:gol237:1,1116339,221,79,1107017,13.44,13.49,13.47,13.42,13.47,1222193152
hslinkern_balanced:gol244:1,1217987,228,81,1208105,14.94,15,14.98,15.06,15.03,1331245056
hslinkern_balanced:gol251:1,1325625,235,84,1315167,16.34,16.36,16.3,16.3,16.28,1448685568
hslinkern_balanced:gol257:1,1422784,240,86,1411819,17.75,17.74,17.83,17.65,17.81,1553543168
hslinkern_balanced:gol263:1,1524578,246,88,1513094,19.17,19.04,19.21,19.05,19.24,1663643648
hslinkern_balanced:gol269:1,1631115,252,90,1619100,20.55,20.64,20.72,20.45,20.68,1778987008
hslinkern_balanced:gol274:1,1723597,257,91,1711130,21.74,21.86,21.89,22.04,21.91,1878601728
hslinkern_balanced:gol279:1,1819510,262,93,1806583,23.11,23.2,23.21,23.14,23.15,1983459328
hslinkern_balanced:gol284:1,1918917,267,95,1905522,24.5,24.55,24.51,24.62,24.52,2090414080
hslinkern_balanced:gol289:1,2021880,272,96,2008008,26.15,26.04,26.11,26.19,26.04,2202611712
hslinkern_balanced:gol331:1,3035725,312,110,3017520,41.11,40.66,40.65,40.73,40.56,3300470784
hslinkern_balanced:gol365:1,4068883,345,122,4046740,56.37,56.4,56.2,56.27,56.11,4420349952
hslinkern_balanced:gol393:1,5077478,373,131,5051802,71.74,71.81,71.71,71.9,71.96,5512966144
hslinkern_balanced:gol417:1,6064344,396,139,6035432,87.34,87.33,87.27,86.83,87.25,6583562240
hslinkern_balanced:gol439:1,7074430,418,146,7042383,103.11,103.55,103.25,103.61,103.09,7678275584
hslinkern_balanced:gol459:1,8084845,437,153,8049808,120.56,120.24,120.73,119.72,119.98,8772988928
hslinkern_balanced:gol478:1,9129800,456,159,9091799,136.12,135.91,136.01,136.79,136.22,9906499584
hslinkern_balanced:gol495:1,10137868,472,165,10097113,154.21,154.27,153.41,153.35,153.07,10999115776
hslinkern_balanced:gol511:1,11152000,488,170,11108565,177.1,171.53,177.27,169.65,176.72,12099072000
hslinkern_balanced:gol526:1,12162172,503,175,12116147,186.9,190.63,187.16,191.04,187.27,13193785344
hslinkern_balanced:gol540:1,13158405,516,180,13109895,205.3,205.06,205.03,205.48,204.99,14274867200
hslinkern_balanced:gol554:1,14207607,530,185,14156547,223.16,226.4,223.38,225.03,222.48,15411523584
hslinkern_balanced:gol567:1,15230494,543,189,15177007,240.5,242.09,241.02,241.02,239.52,16521965568
hslinkern_balanced:gol579:1,16217235,554,193,16161458,256.97,258.12,257.45,257.75,257.35,17591513088
hslinkern_balanced:gol591:1,17245700,566,197,17187585,276.13,277.29,276.94,275.85,277.45,18707197952
hslinkern_balanced:gol602:1,18225851,577,201,18165551,293.66,295.5,293.94,294.82,293.32,19770454016
hslinkern_balanced:gol613:1,19242453,588,204,19179927,311.34,311.62,311.05,312.02,311.28,20872507392
hslinkern_balanced:gol624:1,20296172,599,208,20231380,330.55,331.72,331.73,330.45,331.39,22015455232
}; \addlegendentry{output};

      \addplot[color=black, domain=0:700, samples=20]{x^3*(2/21)};
      \addlegendentry{$\frac{2}{21}k^3$};
    \end{axis}
  \end{tikzpicture}

  \caption{Size of the resulting problem kernel when
    \autoref{alg:eff-sunflo} is applied to conflict hypergraph of the
    \textsc{Golomb Subruler} problem.}
  \label{fig:golomb-removed}
\end{figure}
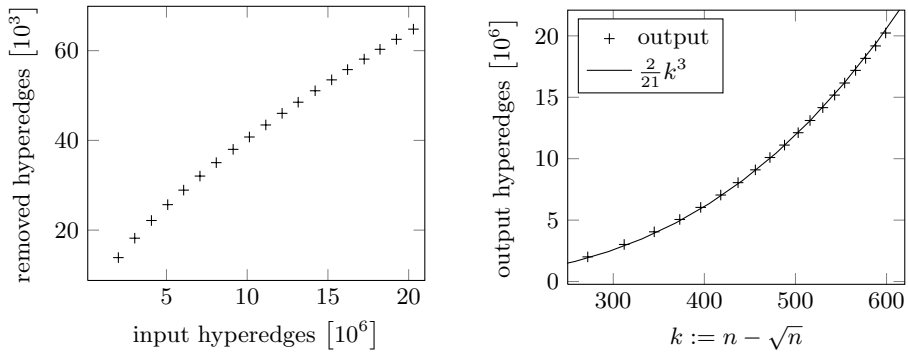

\paragraph{Effect of data reduction.}
As shown in \autoref{fig:golomb-removed}, the kernelization algorithm
removed between $20\cdot 10^3$ and $60\cdot 10^3$~hyperedges from the
input instances.  Thus, although we observed the algorithm to handle
large input instances well, the observed data reduction effect is
rather limited.  This is partly due to the lack of a better upper
bound~$k$ for the size of the sought hitting set: we observed that the
input \hs 4 instances obtained from \textsc{Golomb Subruler} have
roughly $1/12\cdot n^3$~hyperedges. This, for any $n\geq 0$, is
already below the upper bound of~$4!\cdot 4^5\cdot(k+1)^4$ with
$k=\lfloor n-\sqrt n\rfloor$ given by \autoref{thm:optkern-lintime} on
the problem kernel size for \hs4.

The limited effect of the data reduction on \hs4 instances obtained
from \textsc{Golomb Subruler} is also due to the fact that they are
nearly 4-uniform: this prevents hyperedges from getting deleted for
being supersets of smaller hyperedges.  The presence of smaller
hyperedges can significantly influence the output instance size. As an
extreme example, consider a hypergraph containing ${n\choose
  2}$~hyperedges of cardinality two. Then, the output problem kernel
will contain $\bigO(k^2)$~output hyperedges, regardless of how many
more input hyperedges of cardinality~100 there are.  Such phenomena
are not captured in the theoretical upper bound on the problem kernel
size given by \autoref{thm:optkern-lintime}, which is based on the
analysis of uniform hypergraphs.

As shown in \autoref{fig:golomb-removed}, when measuring the size of
the problem kernels in~$k$, we observe that the resulting problem
kernels contain about $2/21\cdot k^3$~hyperedges. Thus, our
empirically measured problem kernel size is lower than the upper bound
of~$3k^3+3k^2$~hyperedges that \citet{SMNW14} have proven using data
reduction rules specifically designed for \textsc{Golomb
  Subruler}. Moreover, our problem kernel is computable in linear
time, while the problem kernel of \citet{SMNW14} takes
$\bigO(k(n+m))$~time. Both problem kernels require the conflict
hypergraph as input.

\paragraph{Summary.} The calloc trie implementation of
\autoref{alg:eff-sunflo} is superior when enough memory is available,
since it is the fastest variant if the C++~environment at hand
implements the allocation of zero-initialized memory using
\texttt{calloc} efficiently. In all other cases, the balanced tree
implementation of \autoref{alg:eff-sunflo} yields a good compromise
between scalability with respect to running time and memory usage.

One can observe that the data reduction effect on nearly uniform
hypergraphs, like those from \textsc{Golomb Subruler}, is rather
limited.  On the other hand, problem kernels for \hs d can be small
even for high values of~$d$ if the input hypergraph is less uniform.

\section{Reducing the number of vertices in
  \boldmath$\bigO(k^{\onepfive d})$ additional time}
\label{sec:fewverts}

This section combines the linear-time computable problem kernel from
\autoref{sec:expl} with techniques of \citet{Abu10} and
\citet[Section~7.3]{Mos10}. This will yield a problem kernel for \hs d
with $\bigO(k^d)$~hyperedges and $\bigO(k^{d-1})$~vertices in
$\bigO(\vernum{}+\edgenum{}+k^{\onepfive d})$~time. Towards this
problem kernel, \autoref{sec:abumoser} first briefly sketches the
running-time bottleneck of the kernelization idea of \citet{Abu10},
which is also a bottleneck in the algorithm of \citet{Mos10}. Then,
\autoref{sec:abumoserimprove} describes our improvements.

\subsection{The approaches of Abu-Khzam and Moser}\label{sec:abumoser} 
\citet{Abu10} has shown a problem kernel for \hs d that comprises
$\bigO(k^{d-1})$~vertices.  \citet[Section~7.3]{Mos10} built upon the
work of \citet{Abu10} to show a problem kernel for \hs d that also
comprises $\bigO(k^{d-1})$~vertices but that, in contrast to the
kernelization algorithm of \citet{Abu10}, yields a subgraph of the
input hypergraph.

The approach of \citet{Abu10} and \citet{Mos10} is as follows.  Given a
hypergraph~$\HG$ and a natural number~$k$, \citet{Abu10} first
computes a maximal \emph{weakly related} set~$\wrel$:

\begin{definition}[\citet{Abu10}]
  A set~$\wrel$ of hyperedges is \emph{weakly related} if every pair
  of hyperedges in~$\wrel$ intersects in at most $d-2$ vertices.
\end{definition}

\begin{figure}
  \centering \subfigure[Input hypergraph. The hyperedges fully below
  the dashed line are a maximal weakly related set of hyperedges. The
  vertices above the dashed line are the independent
  set~$I$.]{\hspace{1.5cm}\begin{tikzpicture}[x=2cm]
      \tikzstyle{operator}=[circle,draw=black,fill=white,minimum
      size=6mm]

      \tikzstyle{edge} = [color=black,opacity=.2,line cap=round, line
      join=round, line width=35pt]

      \node[operator] (a1) at (0,0) {}; \node[operator] (a2) at (1,0)
      {}; \node[operator] (a3) at (1,1) {}; \node[operator] (a4) at
      (0,1) {};

      \node[operator] (b1) at (2,1) {}; \node[operator] (b2) at (2,0)
      {};

      \node[operator] (c1) at (3,1) {}; \node[operator] (c2) at (3,0)
      {};

      \node[operator] (i1) at (0,3) {$i_1$}; \node[operator] (i2) at
      (1,3) {$i_2$}; \node[operator] (i4) at (3,3) {$i_3$};

      \begin{pgfonlayer}{background}
        \draw[edge] (a1.center) -- (a2.center) -- (a3.center) --
        (a4.center) -- cycle;

        \draw[edge, line width=40pt] (b1.center) -- (b2.center) --
        (a2.center) -- (a3.center) -- cycle;

        \draw[edge] (c1.center) -- (c2.center) -- (b2.center) --
        (b1.center) -- cycle;

        \draw[edge, line width=25pt] (a1.center) -- (a3.center) --
        (a4.center) -- (i1.center) -- (a1.center);

        \draw[edge, line width=30pt] (i2.center) -- (a3.center) --
        (a4.center) -- (a1.center) -- (a3.center);

        \draw[edge, line width=30pt] (i4.center) -- (c1.center) --
        (b1.center) -- (c2.center) -- (c1.center);

        \draw[edge, line width=25pt] (i2.center) to[out=0,in=90]
        (b1.center) -- (c1.center) -- (c2.center) -- (b1.center);
      \end{pgfonlayer}

      \draw[dashed] (-1,2) -- (3.5,2); \node at (-0.75,3) {$I$}; \node
      at (-0.75,0.5) {$W$};
    \end{tikzpicture}\hspace{1.5cm}}

  \subfigure[The resulting bipartite graph~$B$ with the thick edges
  being a maximum matching.]{\hspace{1.5cm}\begin{tikzpicture}[x=2cm]
      \tikzstyle{operator}=[circle,draw=black,fill=white,minimum
      size=6mm]

      \node[operator] (i1) at (0,3) {$i_1$}; \node[operator] (i2) at
      (1,3) {$i_2$}; \node[operator] (i4) at (3,3) {$i_3$};

      \node[operator] (w1) at (0.5,1) {}; \node[operator] (w2) at
      (2.5,1) {};

      \draw[very thick] (w1) -- (i1); \draw (w1) -- (i2);

      \draw (w2) -- (i2); \draw[very thick] (w2) -- (i4);

      \draw[dashed] (-1,2) -- (3.5,2); \node at (-0.75,3) {$I$}; \node
      at (-0.75,1) {$S$};
    \end{tikzpicture}\hspace{1.5cm}}
  
  \subfigure[The resulting hypergraph with the unmatched vertex~$i_2$
  and its incident hyperedges removed.]{
    \hspace{1.5cm} \begin{tikzpicture}[x=2cm]
      \tikzstyle{operator}=[circle,draw=black,fill=white,minimum
      size=6mm]

      \tikzstyle{edge} = [color=black,opacity=.2,line cap=round, line
      join=round, line width=35pt]

      \node[operator] (a1) at (0,0) {}; \node[operator] (a2) at (1,0)
      {}; \node[operator] (a3) at (1,1) {}; \node[operator] (a4) at
      (0,1) {};

      \node[operator] (b1) at (2,1) {}; \node[operator] (b2) at (2,0)
      {};

      \node[operator] (c1) at (3,1) {}; \node[operator] (c2) at (3,0)
      {};

      \node[operator] (i1) at (0,3) {$i_1$}; \node[operator] (i4) at
      (3,3) {$i_3$};

      \begin{pgfonlayer}{background}
        \draw[edge] (a1.center) -- (a2.center) -- (a3.center) --
        (a4.center) -- cycle;

        \draw[edge, line width=40pt] (b1.center) -- (b2.center) --
        (a2.center) -- (a3.center) -- cycle;

        \draw[edge] (c1.center) -- (c2.center) -- (b2.center) --
        (b1.center) -- cycle;

        \draw[edge, line width=25pt] (a1.center) -- (a3.center) --
        (a4.center) -- (i1.center) -- (a1.center);

        \draw[edge, line width=30pt] (i4.center) -- (c1.center) --
        (b1.center) -- (c2.center) -- (c1.center);
      \end{pgfonlayer}

      \draw[dashed] (-1,2) -- (3.5,2); \node at (-0.75,3) {$I$}; \node
      at (-0.75,0.5) {$W$};
    \end{tikzpicture}\hspace{1.5cm}}
  \caption{Illustration of the kernelization of
    \citet[Lemma~7.16]{Mos10} using \hs 4.}
  \label{fig:moserkern}
\end{figure}

\noindent Whether a given hyperedge~$\Edg$ can be added to a weakly
related set~$\wrel$, \citet{Abu10} checks in $\bigO(d|W|)$~time. After
adding a hyperedge~$\Edg$ to~$W$, he applies data reduction to~$W$ in
$\bigO(2^d|W|\log |W|)$~time that ensures $|W|\leq k^{d-1}$.  Hence,
since $|W|$~never exceeds~$k^{d-1}$, \citet{Abu10} can compute the
maximal weakly related set~$W$ in $\bigO(2^d\cdot k^{d-1}\cdot
(d-1)\log k\cdot\edgenum{})$~time.

Since $|W|\leq k^{d-1}$, it remains to bound the size of the set~$I$
of vertices not contained in hyperedges of~$\wrel$. This is achieved
by the following steps, which are illustrated in
\autoref{fig:moserkern}.  The set~$I$ is an \emph{independent set},
that is, $I$~contains no pair of vertices occurring in the same
hyperedge~\citep{Abu10}. A~bipartite graph~$\BG=(I\uplus S, \Edgs')$
is constructed from the input hypergraph~$\HG=(\Verts,\Edgs)$, where
$S:=\{{e\subseteq V\mid}\exists v\in I\colon\exists w\in \wrel\colon
e\subseteq w, \{v\}\cup e\in\Edgs\}$ and $\Edgs':=\{\{v,e\}\mid v\in
I, e\in S, \{v\}\cup e\in\Edgs\}$. Whereas \citet{Abu10} shrinks the
size of~$I$ using so-called \emph{crown reductions},
\citet[Lemma~7.16]{Mos10} shows that it is sufficient to compute a
maximum matching in~$\BG$ and to remove unmatched vertices in~$I$
together with the hyperedges containing them from the input
hypergraph. The bound of the number of vertices in the problem kernel
is thus $\bigO(k^{d-1})$, since $|\wrel|\leq k^{d-1}$, and, therefore,
$|I|\leq |S|\leq d|W|\leq dk^{d-1}$.

\subsection{Our improvements}\label{sec:abumoserimprove}
We now discuss our running time improvements over the kernelization
algorithms of \citet{Abu10} and \citet{Mos10}.

Given a hypergraph~$\HG$ and a natural number~$k$, we first compute
our problem kernel in $\bigO(\vernum{}+\edgenum{})$~time, leaving
$\bigO(k^d)$~hyperedges in~$\HG$. Afterward, we aim for applying the
ideas of \citet{Abu10} and \citet{Mos10} to reduce the number of vertices
to~$\bigO(k^{d-1})$. However, as discussed in \autoref{sec:abumoser},
the computation of a maximal weakly related set on our reduced
instance already takes $\bigO(2^d\cdot k^{d-1}\cdot(d-1)\log
k\cdot\edgenum{})=\bigO(k^{2d-1}\log k)$~time. We improve the running
time of this step in order to show the following theorem.

\begin{theorem}\label{thm:abufast}
  \hs d has a problem kernel with $d!\cdot d^{d+1} \cdot
  (k+1)^d$~hyperedges and $2\cdot d!\cdot d^{d+1}\cdot
  (k+1)^{d-1}$~vertices computable in $\bigO(d\cdot\vernum{}+2^dd\cdot
  \edgenum{}+(d!\cdot d^{d+1} \cdot (k+1)^{d})^{1.5})$~time.
\end{theorem}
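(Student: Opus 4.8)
The plan is to run the two kernelizations in sequence and account carefully for the running time, using the already-established size bound on the first kernel to bound the cost of the second. First I would invoke \autoref{thm:optkern-lintime} (via \autoref{alg:eff-sunflo}) on the input $(\HG,k)$, obtaining in $\bigO(d|\Verts|+2^dd\cdot|\Edgs|)$ time an equivalent instance $(\HGout,k)$ with $m := d!\cdot d^{d+1}\cdot(k+1)^d$ hyperedges and at most $dm$ vertices. From now on, $|\Edgs|\le m$ and $|\Verts|\le dm$, so any subsequent step whose running time is polynomial in the instance size is polynomial in $m$; this is the key leverage. The correctness of the final instance follows because each of the two stages is individually solution-preserving (the first by \autoref{lem:kern-correct}, the second by the correctness of the Abu-Khzam/Moser construction), and equivalence composes.

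Next I would apply the vertex-reduction of \citet{Abu10} and \citet{Mos10} to $\HGout$: compute a maximal weakly related set $\wrel$ with $|\wrel|\le k^{d-1}$, build the bipartite graph $\BG=(I\uplus S,\Edgs')$, compute a maximum matching on $\BG$, and delete from $\HGout$ the unmatched vertices of $I$ together with all hyperedges containing them. Since $|\wrel|\le k^{d-1}$ we get $|I|\le|S|\le d k^{d-1}$, so the surviving vertex set has size at most $|S\text{-side vertices in }\wrel| + |I| \le d\cdot k^{d-1} + d k^{d-1}$; absorbing constants and bounding in terms of $m$ yields the claimed $2d!\cdot d^{d+1}\cdot(k+1)^{d-1}$ vertices. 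The hyperedge count only decreases, so it stays at most $m$. The remaining task is purely a running-time accounting: the naive bound on computing $\wrel$ was $\bigO(2^d k^{d-1}(d-1)\log k\cdot|E|)$, which on our reduced instance is $\bigO(k^{2d-1}\log k)$, and this is what must be improved to fit under $\bigO(m^{1.5})$.

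The main obstacle — and the real content of the proof beyond bookkeeping — is this running-time improvement for computing the maximal weakly related set. The bottleneck in \citet{Abu10} is that each candidate hyperedge is tested against all of $\wrel$ in $\bigO(d|\wrel|)$ time and then data reduction costs $\bigO(2^d|\wrel|\log|\wrel|)$, summed over $|\Edgs|$ candidates. I would replace the brute-force ``test against all of $\wrel$'' by an indexing structure: maintain, for each of the $\bigO(d^2)$ subset-sizes $\le d-1$, a trie or hash table keyed by the (sorted) subsets of the hyperedges already in $\wrel$, so that checking whether a new hyperedge $\Edg$ shares more than $d-2$ vertices with some member of $\wrel$ reduces to looking up the $\bigO(2^d)$ subsets of $\Edg$ of size $d-1$ — in $\bigO(2^d d)$ time per hyperedge, independent of $|\wrel|$, exactly the trie technique already used in \autoref{lem:kern-schnell}. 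Summed over the $\le m$ hyperedges this is $\bigO(2^d d\cdot m)$, which is dominated by the first kernelization's running time. Then the matching step on $\BG$: $\BG$ has $\bigO(k^{d-1})$ vertices on each side and $\bigO(d\cdot m)$ edges (each surviving hyperedge $\{v\}\cup e$ with $v\in I$ contributes one edge), so the Hopcroft--Karp algorithm runs in $\bigO(\sqrt{|V(\BG)|}\cdot|E(\BG)|) = \bigO(\sqrt{k^{d-1}}\cdot d\, m)$. Since $k^{d-1}\le m$, this is $\bigO(m^{1.5})$ up to the $d$-dependent constants folded into $\onepfive d$. Collecting the three contributions — $\bigO(d|\Verts|+2^dd|\Edgs|)$ for the first kernel, $\bigO(2^d d\cdot m)$ for $\wrel$, and $\bigO(m^{1.5})$ for the matching and clean-up — and observing $m = d!\cdot d^{d+1}\cdot(k+1)^d$ gives the stated bound $\bigO(d|\Verts|+2^dd\cdot|\Edgs| + (d!\cdot d^{d+1}\cdot(k+1)^d)^{1.5})$. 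The one point to be careful about is that the trie for $\wrel$ must be initialized without paying $\bigO(|\Verts|)$ per node; but the same lazy-initialization-via-Radix-Sort trick from the proof of \autoref{lem:kern-schnell} applies verbatim, since the total number of subsets ever inserted is $\bigO(2^d m)$.
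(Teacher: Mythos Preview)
Your overall plan is right, and your trie-based speedup for \emph{constructing} a maximal weakly related set~$W$ matches the paper's \autoref{lem:weakrel}. But there is a genuine gap at the heart of the argument: you assert $|W|\le k^{d-1}$ without justification. In \citet{Abu10} that bound is not a property of maximal weakly related sets per se; it is enforced by the interleaved data reduction on~$W$ that costs $\bigO(2^d|W|\log|W|)$ per step. You correctly identify this data reduction as a cost, but your proposed fix (the trie) only accelerates the \emph{membership test}, not the data reduction. If you drop the data reduction, you have no bound on~$|W|$ at all---a maximal weakly related set in a hypergraph with $m$ hyperedges can have up to $m$ elements---so the vertex bound collapses. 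If you keep the data reduction, its total cost on the reduced instance is still $\bigO(k^{2d-1}\log k)$, which for $d\ge 3$ exceeds the target $\bigO(m^{1.5})=\bigO(k^{1.5d})$.

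The paper closes this gap with a different idea that you are missing: it does \emph{not} bound $|W|$ via data reduction at all. Instead, it exploits that the input to the second stage is the output of \autoref{alg:eff-sunflo}, which by \autoref{lem:few-petals} contains no sunflower with more than $d(k+1)$ petals. Since $W$ is weakly related, every pair of hyperedges in each $\ell$-uniform slice~$W_\ell$ intersects in at most $\ell-2$ vertices; applying the refined sunflower lemma (\autoref{lem:largeflowers}) with $b=2$ and $c=d(k+1)$ then yields $|W_\ell|\le \ell!\,d^{\ell-1}(k+1)^{\ell-1}$, hence $|W|\le d!\,d^d(k+1)^{d-1}$ automatically. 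This is why the final vertex bound carries the large $d!\,d^{d+1}$ constant (which you tried to recover by ``absorbing constants'', but it actually comes from this sunflower argument, not from $k^{d-1}$). Once you have this bound on~$|W|$ for free, the rest of your accounting---the trie construction of~$W$ in $\bigO(d^2 m)$ time and Hopcroft--Karp in $\bigO(m^{1.5})$---goes through essentially as you wrote it.
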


\noindent Note that the problem kernel resulting from
\autoref{thm:abufast} will no longer be expressive in the sense of
\autoref{sec:expressiveness}.  For example, not every minimal hitting
set of size at most~$k$ for the input hypergraph will be a minimal
hitting set for the problem kernel.  This can be observed in
\autoref{fig:moserkern}, where the vertex~$i_2$ might be contained in
a minimal hitting set of the input hypergraph and is absent in the
output hypergraph.

To prove \autoref{thm:abufast}, we compute a maximal weakly related
set~$\wrel$ in linear time and show that our problem kernel already
ensures $|\wrel|\in \bigO(k^{d-1})$ and thus, that further data
reduction on~$W$ is unnecessary. %
To compute a maximal weakly related set in linear time, we employ
\autoref{alg:weakrel}.

After some initialization work in
lines~\ref{lin:init2start}--\ref{lin:init2end}, \autoref{alg:weakrel}
in lines~\ref{lin:foredge}--\ref{lin:elttrue} adds a hyperedge~$\Edg$
to the weakly related set~$\wrel$ if none of the
subsets~$\Core\subseteq\Edg$ with~$|\Core|=d-1$ is a subset of a set
previously added to~$\wrel$.  The information whether~$\Core$ is some
subset of a hyperedge previously added to~$\wrel$ is saved
in~$\intersect[\Core]$.  Note that, in \autoref{lin:elttrue},
\autoref{alg:weakrel} also sets ``$\intersect[\Edg\setminus\Core]\gets
{}$true'' and thus saves which vertices are parts of hyperedges added
to~$\wrel$. We will use this later to quickly reduce the number of
vertices not contained in hyperedges in~$\wrel$.

\begin{algorithm}[t]\small
  \caption{Computation of a maximal weakly related set}
  \label{alg:weakrel}
  \KwIn{Hypergraph~$\HG=(\Verts,\Edgs)$, natural number~$k$.}
  \KwOut{Maximal weakly related set~$\wrel$.}
  $\wrel\gets\emptyset$\nllabel{lin:init2start}\;
  \ForEach(\tcp*[f]{\textrm{Initialization for each
      hyperedge}}){$\Edg\in\Edgs$}{
    \ForEach{$\Core\subseteq\Edg,|\Core|=d-1$}{
      $\intersect[\Core]\gets \bfalse$\tcp*{\textrm{No hyperedges
          in~$\wrel$ contain~$\Core$ yet.}}
      $\intersect[\Edg\setminus\Core]\gets\bfalse$\tcp*{\textrm{The
          vertex in $\Edg\setminus\Core$ is not in~$W$ yet.}}  } }
  \nllabel{lin:init2end}
  
  \ForEach(\nllabel{lin:foredge}){$\Edg\in\Edgs$}{
    \If(\nllabel{lin:wsuit}){$\forall \Core\subseteq e,|\Core|=
      d-1\colon\intersect[\Core]=\bfalse$}{
      $\wrel\gets\wrel\cup\{e\}$\nllabel{lin:addw}\;
      \ForEach{$\Core\subseteq\Edg,|\Core|= d-1$}{
        $\intersect[\Core]\gets \btrue$\nllabel{lin:intertrue}\;
        $\intersect[\Edg\setminus\Core]\gets\btrue$\nllabel{lin:elttrue}\;
      } } } \Return{\wrel}\nllabel{lin:retw}\;
\end{algorithm}

\begin{lemma}\label{lem:weakrel}
  Given a hypergraph $\HG$, a maximal weakly related set is computable
  in $\bigO(d\cdot\vernum{}+d^2\cdot \edgenum{})$~time.
\end{lemma}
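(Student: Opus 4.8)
The plan is to build a maximal weakly related set $\wrel$ greedily in a single pass over the hyperedge list, maintaining enough auxiliary information so that each hyperedge can be tested for "weak relatedness" to the current $\wrel$ in $\bigO(d)$ amortized time rather than $\bigO(d|\wrel|)$ time. Recall that $\wrel\subseteq\Edgs$ is weakly related if every pair of hyperedges in it shares at most $d-2$ vertices; equivalently, no two hyperedges of $\wrel$ share a subset of size $d-1$. So I would reformulate the admissibility test in terms of $(d-1)$-subsets: a new hyperedge $\Edg$ can be added to $\wrel$ if and only if none of its $d$ subsets of size $d-1$ already appears as a subset of some hyperedge already in $\wrel$.

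First I would preprocess: sort every hyperedge (as in \autoref{lem:kern-schnell}, in $\bigO(d\log d\cdot|\Edgs|)$ time, or cheaper if one is willing to pay $\bigO(d|\Verts|)$ for a radix/bucket approach — matching the claimed $\bigO(d|\Verts|+d^2|\Edgs|)$ bound). Then I maintain a dictionary $D$ keyed by the (canonical, sorted) $(d-1)$-element vertex subsets that occur as subsets of hyperedges currently in $\wrel$. The dictionary is realized by the same trie-over-$\Verts$ data structure used in \autoref{lem:kern-schnell}: a lookup or insertion of a $(d-1)$-subset costs $\bigO(d)$ time, and the trie nodes actually touched are initialized lazily via a presorted list of all $(d-1)$-subsets of all hyperedges (there are at most $d|\Edgs|$ of them), so the total initialization is $\bigO(d|\Verts|+d\cdot d|\Edgs|)=\bigO(d|\Verts|+d^2|\Edgs|)$. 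Now iterate over $\Edgs$: for the current $\Edg$, enumerate its $d$ sorted $(d-1)$-subsets in $\bigO(d^2)$ total time; look each up in $D$ in $\bigO(d)$ time; if none is present, add $\Edg$ to $\wrel$ and insert all $d$ of its $(d-1)$-subsets into $D$. Each hyperedge is thus processed in $\bigO(d^2)$ time, giving $\bigO(d|\Verts|+d^2|\Edgs|)$ overall.

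For correctness I must argue two things. Maximality: the algorithm only skips $\Edg$ when some $(d-1)$-subset of $\Edg$ is already a $(d-1)$-subset of a hyperedge already placed in $\wrel$, which means $\Edg$ and that hyperedge share $\ge d-1$ vertices, so $\Edg$ is genuinely inadmissible; hence at the end no hyperedge of $\Edgs\setminus\wrel$ could be added, i.e.\ $\wrel$ is maximal. Weak relatedness (the invariant): I would show by induction on the pass that $\wrel$ stays weakly related — whenever $\Edg$ is added, by the test none of its $(d-1)$-subsets coincides with a $(d-1)$-subset of any earlier member, and since two $d$-sets intersect in $\ge d-1$ vertices iff they have a common $(d-1)$-subset, $\Edg$ intersects every earlier member in at most $d-2$ vertices; smaller hyperedges ($|\Edg|<d$) are handled the same way, noting a set of size $<d-1$ has no $(d-1)$-subset and a set of size exactly $d-1$ contributes itself as its unique $(d-1)$-subset, so the dictionary still records exactly the obstructions. (Hyperedges of size $\le d-2$ can never clash and are simply always added.)

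The main obstacle is the lazy initialization bookkeeping for the trie, exactly as in \autoref{lem:kern-schnell}: one cannot afford to zero out $|\Verts|$-sized arrays at every internal node, so I reuse the trick of presorting the list $L$ of all $(d-1)$-subsets of all hyperedges with radix sort (their string representation over the alphabet $\Verts$ has length $d-1$), then walking $L$ once and, for each consecutive pair, initializing only the $\bigO(d)$ trie cells on which the two strings differ — this correctly initializes precisely the cells that will ever be accessed. The remaining subtlety is purely arithmetic: checking that this presort, the per-hyperedge $(d-1)$-subset enumeration, and the $d$ dictionary operations per added hyperedge all fit inside $\bigO(d|\Verts|+d^2|\Edgs|)$, which they do since $|\wrel|\le|\Edgs|$ and each of the at most $d|\Edgs|$ subsets is inserted or looked up a constant number of times at $\bigO(d)$ cost.
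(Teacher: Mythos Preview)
Your proposal is correct and follows essentially the same approach as the paper: a single greedy pass over $\Edgs$, maintaining a trie-indexed flag $\intersect[C]$ for each $(d-1)$-subset $C$ occurring in a hyperedge already placed in $\wrel$, with the same lazy trie initialization via a presorted list of all such subsets borrowed from \autoref{lem:kern-schnell}. The paper's \autoref{alg:weakrel} is exactly your dictionary test, and its correctness and $O(d|\Verts|+d^2|\Edgs|)$ accounting match yours; the only cosmetic difference is that the paper also stores the singleton flags $\intersect[\{v\}]$ during the same pass (for later use in \autoref{thm:abufast}, not for this lemma), and your discussion of hyperedges of size $<d$ is slightly more explicit than the paper's.
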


\begin{proof}
  First, observe that the set~$W$ returned in \autoref{lin:retw} of
  \autoref{alg:weakrel} when applied to~$\HG=(\Verts,\Edgs)$ is indeed
  weakly related: let~$w_1\ne w_2\in \Edgs$ intersect in more than
  $d-2$~vertices and assume that $w_1$~is added to~$W$ in
  \autoref{lin:addw}. Let $\Core:=w_1\cap w_2$. Obviously,
  $|\Core|=d-1$. Hence, when $w_1$ is added to~$W$, we apply
  ``$\intersect[\Core]\gets\btrue$'' in
  \autoref{lin:intertrue}. Therefore, when $\Edg=w_2$ is considered in
  \autoref{lin:foredge}, the condition in \autoref{lin:wsuit} does not
  hold, which implies that $w_2$~is not added to~$W$ in
  \autoref{lin:addw}.  In the same way it follows that each hyperedge
  is added to~$W$ if it does not intersect any hyperedge of~$W$ in
  more than $d-2$~vertices. Therefore, $W$~is~maximal.

  Now, \autoref{alg:weakrel} works as follows.  We use
  \autoref{lem:stringtrie} to look up values in~$\intersect[]$ in
  $\bigO(d)$~time.  To this end, like in the proof of
  \autoref{lem:kern-schnell}, we represent vertex subsets of size at
  most~$d$ as sorted sequences of length at most~$d$. Thus, we first
  sort each hyperedge of~$\HG$ in $\bigO(\edgenum{}\cdot d\log
  d)$~total time.  To apply \autoref{lem:stringtrie} to create the
  associative array~$\intersect[]$, we need a list~$L$ of all values
  that we are going to store values for. As~$L$, we use the list that,
  for each hyperedge~$\Edg$ of~$\HG$ and each vertex~$v\in e$,
  contains~$\Edg\setminus \{v\}$ and $\{v\}$.  Of course,
  $\Edg\setminus\{v\}$~can be computed in $\bigO(d)$~time from~$\Edg$
  so that~$\Edg\setminus\{v\}$ is sorted. It follows that $L$~contains
  at most $2d\cdot \edgenum{}$~elements and is computable in
  $\bigO(d^2\edgenum{})$~time. Hence, by \autoref{lem:stringtrie}, we
  can build the associative array~$\intersect[]$ in
  $\bigO(d\vernum{}+d^2\cdot\edgenum{})$~time and looking up values in
  $\intersect[]$ works in $\bigO(d)$~time for elements of~$L$.

  Now, the initialization in lines
  \ref{lin:init2start}--\ref{lin:init2end} works in
  $\bigO(d^2\cdot\edgenum{})$~time.  Finally, for every hyperedge, the
  body of the for-loop in \autoref{lin:foredge} can be executed in
  $\bigO(d^2)$~time by doing $\bigO(d)$-time look-ups for each of the
  $2d\cdot\edgenum{}$~sets.
\end{proof}
\noindent We can now prove \autoref{thm:abufast} by showing how to
compute a problem kernel with $\bigO(k^{d-1})$~vertices in
$\bigO(\vernum{}+\edgenum{}+k^{\onepfive d})$~time.

\begin{proof}[Proof of \autoref{thm:abufast}]
  It is shown in \autoref{thm:optkern-lintime} that \hs d has a
  problem kernel with $d!\cdot d^{d+1} \cdot (k+1)^d$~hyperedges that
  is computable in $\bigO(d\vernum{}+2^dd\cdot \edgenum{})$ time.  It
  remains to show that, in additional $\bigO(d!\cdot d^{d+1} \cdot
  (k+1)^{d})^{1.5}$~time, the number of vertices of a hypergraph~$\HG$
  output by \autoref{alg:eff-sunflo} can be reduced to $2\cdot d!\cdot
  d^{d+1}\cdot (k+1)^{d-1}$. To this end, we follow the approaches of
  \citet{Abu10} and \citet{Mos10} as discussed in \autoref{sec:abumoser} and as
  illustrated in \autoref{fig:moserkern}.

  First, we compute a maximal weakly related set~$W$ in~$\HG$ in
  $\bigO(d\cdot \vernum{}+d^2\cdot \edgenum{})$~time using
  \autoref{alg:weakrel}. We show that~$|W|\leq d!\cdot d^d\cdot
  (k+1)^{d-1}$. To this end, consider the
  hypergraph~$\HG_{\dslice}:=(\Verts,W_\dslice)$ for $1\leq\dslice\leq
  d$, where $W_\dslice$ is the set of cardinality-$\dslice$ hyperedges
  in~$W$. Since~$\HG$ has been output by \autoref{alg:eff-sunflo}, we
  know that, by \autoref{lem:few-petals}, $\HG_{\dslice}$ has no
  sunflowers with more than ${d(k+1)}$~petals.  Moreover, since every
  pair of hyperedges in~$W$ intersects in at most $d-2$~vertices, also
  each pair of hyperedges of~$\HG_\dslice$ intersects in at most
  $d-2$~vertices. Hence, by \autoref{lem:largeflowers} with~$\cone=2$
  and~$\ctwo=d(k+1)$, we know that $\HG_\dslice$ for $\dslice\geq2$
  has at most $\dslice!d^{\ell-1}(k+1)^{\ell-1}$ hyperedges. Moreover,
  $\HG_1$ contains at most $d(k+1)$~hyperedges, as they form a
  sunflower with empty core.  Therefore, $|W|\leq d!\cdot d^d\cdot
  (k+1)^{d-1}$.

  Next, we construct a bipartite graph~$\BG=(I\uplus S, \Edgs')$ from
  the input hypergraph~$\HG=(\Verts,\Edgs)$, where
  \begin{enumerate}[i)]
  \item $I$ is the set of vertices in~$\Verts$ not contained in any
    hyperedge in~$W$, which is an independent set~\citep{Abu10},
  \item $S:=\{e\subseteq V\mid\exists v\in I\colon\exists w\in
    \wrel\colon e\subseteq w, \{v\}\cup e\in\Edgs\}$, and
  \item $\Edgs':=\{\{v,e\}\mid v\in I, e\in S, \{v\}\cup e\in\Edgs\}$.
  \end{enumerate}
  This can be done in $\bigO(d^2\cdot\edgenum{})$~time by exploiting
  the information stored in the associative array~$\intersect[]$
  computed by \autoref{alg:weakrel}: for each $\Edg\in\Edgs$ with
  $|\Edg|=d$ and each~$v\in\Edg$, add $\{v,\Edg\setminus\{v\}\}$ to
  the graph~$\BG$ if and only if
  $\intersect[\Edg\setminus\{v\}]=\btrue$ and
  $\intersect[\{v\}]=\bfalse$. In this case, it follows that~$\Edg$
  can be partitioned into
  \begin{enumerate}[i)]
  \item a subset~$\Edg\setminus\{v\}$ of a hyperedge of~$W$, since
    $\intersect[\Edg\setminus\{v\}]=\btrue$, and
  \item the vertex~$v$, which is not contained in any hyperedge
    in~$W$, since we have $\intersect[\{v\}]=\bfalse$, and, hence, is
    contained in~$I$.
  \end{enumerate}
  \looseness=-1 Thus, $\Edg$ clearly satisfies the definition
  of~$\Edgs'$.  Observe that the graph~$\BG$ constructed in this way
  contains at most
  $|\Edgs|=\edgenum{}$~edges. %
  It remains to shrink~$I$ so that it contains at most
  $|S|$~vertices. Then, the number of vertices in the output
  hypergraph will be at most~$d|W|+|I|\leq d|W|+|S|\leq 2d|W|=2\cdot
  d!\cdot d^{d+1}\cdot (k+1)^{d-1}$.  This, as shown by
  \citet[Section~7.3]{Mos10}, is achieved by computing a maximum
  matching in~$\BG$ and deleting from~$\HG$ the unmatched vertices
  in~$I$ and the hyperedges containing them.  To analyze the running
  time of computing the maximum matching, recall that the number of
  edges in~$\BG$ is at most $m\leq d!\cdot d^{d+1} \cdot (k+1)^d$ and
  that the number~$|I|+|S|$ of vertices is at most twice as
  much. Hence, a maximum matching in~$\BG$ can be computed in
  $\bigO(\sqrt{|I\uplus S|}\cdot|\Edgs'|) =\bigO(d!\cdot d^{d+1} \cdot
  (k+1)^{d})^{1.5}$~time using the algorithm of Hopcroft and
  Karp~\citep[Theorem~16.4]{Sch03}.
\end{proof}

\section{Conclusion}
We have given an understanding of expressive kernelization for \hs d
and have shown, as earlier claimed by \citet{NR03}, that a problem
kernel for \hs d with $\bigO(k^d)$~hyperedges and vertices can be
computed in linear time.  Using the linear-time computable problem
kernel for \hs d, we have improved the worst-case running
times %
of the $\bigO(k^{d-1})$-vertex problem kernels by \citet{Abu10} and
\citet{Mos10}.

Our experiments have shown that the kernelization algorithm runs
efficiently, yet the observed data reduction effect on the nearly
uniform hypergraphs occurring in the construction of Golomb rulers was
limited.

An interesting question is whether a problem kernel with
$\bigO(k^{d-1})$~vertices and $\bigO(k^d)$~hyperedges for \hs d can be
computed in linear time. Answering this question would merge the best
known results for problem kernels for \hs d. However, to date, all
$\bigO(k^{d-1})$-vertex problem kernels for \hs d that we are aware
of, that is, the problem kernels by \citet{Abu10} and \citet{Mos10}, involve the
computation of maximum matchings. This seems to be difficult to avoid
this bottleneck.

\newcommand{\noopsort}[1]{}

\end{document}

